\newcommand{\eg}{\textit{e.g.},\@\xspace}
\newcommand{\ie}{\textit{i.e.,}\@\xspace}
\newcommand{\etal}{\textit{et al.}\@\xspace}
\definecolor{annotationcolor}{rgb}{0.8,0.8,0.8}
\newcommand{\annotation}[1]{{\colorbox{annotationcolor}{$#1$}}}
\newcommand{\xtenclocks}{$\textsf{X10}\!\mid_{\textsf{clocks}}$\xspace}
\newcommand{\kw}[1]{\text{\lstinline|#1|}}
\newcommand{\asynck}{\kw{async}}
\newcommand{\nextk}{\kw{next}}
\newcommand{\resumek}{\kw{resume}}
\newcommand{\dropk}{\kw{drop}}
\newcommand{\clockk}{\kw{clock}}
\newcommand{\newClockk}{\kw{makeClock}}
\newcommand{\ink}{\kw{in}}
\newcommand{\finishk}{\kw{finish}}
\newcommand{\letk}{\kw{let}}
\newcommand{\unitk}{\kw{unit}}
\newcommand{\joink}{\kw{join}}
\newcommand{\join}[1]{\joink\ {#1}}
\newcommand{\async}[2]{\asynck\ {#1}\ {#2}}
\newcommand{\resume}[1]{\resumek\ {#1}}
\newcommand{\drop}[1]{\dropk\ {#1}}
\newcommand{\newClock}{\newClockk}
\newcommand{\lets}[3]{\letk\ {#1}={#2}\ \ink\ {#3}}
\newcommand{\finish}[1]{\finishk {#1}}
\newcommand{\dom}{\operatorname{dom}}
\newcommand{\state}[2]{ {#1}; {#2}}
\newcommand{\activity}[3]{({#1}, {#2}, {#3})}
\newcommand{\heapClock}[3]{\langle{#1}, {#2}, {#3}\rangle}
\newcommand{\unitV}{()}
\newcommand{\mkRrule}[1]{\textsc{R-#1}}
\newcommand{\mkTrule}[1]{\textsc{T-#1}}
\newcommand{\mkErule}[1]{\textsc{E-#1}}
\newcommand{\mkWFrule}[1]{\textsc{WF-#1}}
\newcommand{\Rletval}{\mkRrule{let-val}}
\newcommand{\Rlet}{\mkRrule{let}}
\newcommand{\Rmake}{\mkRrule{make}}
\newcommand{\Rasync}{\mkRrule{async}}
\newcommand{\Rresume}{\mkRrule{resume}}
\newcommand{\Rnext}{\mkRrule{next}}
\newcommand{\Rdrop}{\mkRrule{drop}}
\newcommand{\Ractivity}{\mkRrule{activity}}
\newcommand{\Rfinish}{\mkRrule{finish}}
\newcommand{\Rjoin}{\mkRrule{join}}
\newcommand{\Easync}{\mkErule{async}}
\newcommand{\Eresume}{\mkErule{resume}}
\newcommand{\Edrop}{\mkErule{drop}}
\newcommand{\EactSet}{\mkErule{act-set}}
\newcommand{\Eact}{\mkErule{act}}
\newcommand{\EnextI}{\mkErule{next1}}
\newcommand{\EnextII}{\mkErule{next2}}
\newcommand{\WFstate}{\mkWFrule{state}}
\newcommand{\WFactSet}{\mkWFrule{act-set}}
\newcommand{\WFactSetE}{\mkWFrule{act-set-e}}
\newcommand{\WFheap}{\mkWFrule{heap}}
\newcommand{\WFheapE}{\mkWFrule{heap-e}}
\newcommand{\WFactClock}{\mkWFrule{act-clock}}
\newcommand{\WFactClockE}{\mkWFrule{act-clock-e}}
\newcommand{\update}[2]{\{{#1}\colon {#2}\}}
\newcommand{\subst}[2]{[{#1}/{#2}]}
\newcommand{\err}{\mathrm{Error}}
\newcommand{\nerr}{\notin \mathrm{Error}} 
\newcommand{\restrictedto}[1]{\!\mid_{#1}}
\newcommand{\load}{\operatorname{load}}
\newcommand{\Tlet}{\mkTrule{let}}
\newcommand{\Tasync}{\mkTrule{async}}
\newcommand{\Tmake}{\mkTrule{make}}
\newcommand{\Tresume}{\mkTrule{resume}}
\newcommand{\Tnext}{\mkTrule{next}}
\newcommand{\Tdrop}{\mkTrule{drop}}
\newcommand{\Theap}{\mkTrule{heap}}
\newcommand{\TcheckView}{\mkTrule{view}}
\newcommand{\Tact}{\mkTrule{act}}
\newcommand{\TactSet}{\mkTrule{act-set}}
\newcommand{\TclockSeq}{\mkTrule{clock-seq}}
\newcommand{\Tstate}{\mkTrule{state}}
\newcommand{\Tunit}{\mkTrule{unit}}
\newcommand{\TclockR}{\mkTrule{clock-ref}}
\newcommand{\Tvar}{\mkTrule{var}}
\newcommand{\Twfc}{\mkTrule{wf-c}}
\newcommand{\Twfu}{\mkTrule{wf-u}}
\newcommand{\Tfinish}{\mkTrule{finish}}
\newcommand{\Tjoin}{\mkTrule{join}}
\newcommand{\Tvalue}{\mkTrule{value}}
\newcommand{\clockT}[1]{\clockk({#1})}
\newcommand{\unitT}{\unitk}
\newcommand{\pad}{\;\;}
\newcommand{\Space}[1]{\pad{#1}\pad}
\newcommand{\grmeq}{\Space{::=}}
\newcommand{\grmor}{\;\mid\;}
\newcommand{\grmin}{\quad \;} 
\newcommand{\eqdef}{\triangleq}
\newtheorem{thm}{Theorem}
\newtheorem{lem}[thm]{Lemma}
\newtheorem{cor}[thm]{Corollary}
\lstdefinelanguage{x10clocks}{
  morekeywords={async, makeClock, drop, finish, next, resume, let, in,
     join, unit, clock, clocked},
  morecomment=[s]{\{-}{-\}},
  morecomment=[l]//,
  moredelim=[is][\emph]{'}{'},
  flexiblecolumns=true,
   literate=
          {e0}{$e_0$}2 {e1}{$e_1$}2
          {e2}{$e_2$}2 {e3}{$e_3$}2
          {e4}{$e_4$}2 {e5}{$e_5$}2
          {e6}{$e_4$}2 {e7}{$e_5$}2
          {a1}{$a_1$}2 {a2}{$a_2$}2
          {a3}{$a_3$}2 {a4}{$a_4$}2
          {alpha}{$\alpha$}1
          {emptyset}{$\emptyset$}1
}
\title{Types for X10 Clocks}
\author{
Francisco Martins
\institute{LaSIGE \& University of Lisbon\\Portugal}
\email{fmartins@di.fc.ul.pt}
\and
Vasco T. Vasconcelos
\institute{LaSIGE \& University of Lisbon\\Portugal}
\email{vv@di.fc.ul.pt}
\and
Tiago Cogumbreiro
\institute{LaSIGE \& University of Lisbon\\Portugal}
\email{cogumbreiro@di.fc.ul.pt}
}
\begin{document}
\maketitle
\begin{abstract}
  X10 is a modern language built from the ground up to handle future
  parallel systems, from multicore machines to cluster configurations.
  We take a closer look at a pair of synchronisation mechanisms:
  finish and clocks. The former waits for the termination of parallel
  computations, the latter allow multiple concurrent activities to
  wait for each other at certain points in time.
  In order to better understand these concepts we study a type system
  for a stripped down version of X10.
  The main result assures that well typed programs do not run into the
  errors identified in the X10 language reference, namely the
  \texttt{ClockUseException}.
  The study will open, we hope, doors to a more flexible utilisation
  of clocks in the X10 language.
\end{abstract}


\section{Introduction}
\label{sec:motivation}

New high-level concurrency primitives are needed more than ever, now
that multicore machines lay on our desks and laps.
One such primitive is \emph{clocks}, a generalisation of barriers
introduced in the X10 programming language~\cite{charles.etal:x10}.
%
%
Barriers are a collective synchronisation mechanism common in Single
Program Multiple Data (SPMD)
programs~\cite{darema:spmd-model,tseng:compiler-opt-barrier}.
Distinct from synchronisation mechanisms like locks and monitors that
let the programmer think about the access to a resource, barriers
allow reasoning about process workflow.
Clocks are a sophisticated form of barriers that feature dynamic sets
of participants for more dynamic programming
paradigms, and a two-phase synchronisation (or fuzzy barrier) for
improved processor utilisation~\cite{gupta:fuzzy-barrier}.
The construct is integrated in the X10 language with a promise that it
cannot introduce deadlocks.
Another primitive, finish, causes an activity to block (\ie to suspend
its execution) until all its sub-activities have completed.
Dynamic, unbounded spawning of activities and the finish construct
enable fork/join parallelism. 
The fifth version of language Cilk introduced a specialised
``work-stealing'' scheduler algorithm that takes advantage of the
fork/join model to yield highly efficient language~\cite{frigo:cilk5}.
This style of parallel programming and its run-time system (the
work-stealing scheduler) were then incorporated in mainstream
languages, \eg the fork/join framework proposed for
Java~7~\cite{lea:java-fork-join}.

Even though the X10 language specification~\cite{saraswat:x10-report}
provides a clear, plain English, description of the intended semantics
(and properties) of the language, and a formalisation of the
semantics~\cite{saraswat.jagadeesan:concurrent-clustered-programming}
allows to prove a deadlock freedom theorem, we decided to investigate
a simpler setting in which similar results could be obtained. The aim
is not only to obtain a progress property for typable programs based
on a simple type system, but also to hopefully provide for clock-safe
extensions of the X10 language itself.

Towards this end, we have stripped X10 from most of its features,
ending up with a simple concurrent language equipped with finish
and the full functionality of X10 clocks, which we call ``X10
restricted to clocks,'' \xtenclocks{} for short.
For this language we have devised a simple operational semantics with
thread (or activity as called in X10) local and global views of (heap
allocated) clocks. We have also crafted a simple type system, based on
singleton types, drawing expertise from previous work on low-level
programming
languages~\cite{vasconcelos.martins.cogumbreiro:type-inference-mil}.
Typable programs are exempt from clock related errors;
we conjecture that typable programs enjoy a form of progress property.

Saraswat and Jagadeesan study the X10 programming model, by presenting
a formal model of the language that includes clocks, async/finish,
conditional atomic blocks, and a hierarchic shared
memory~\cite{saraswat.jagadeesan:concurrent-clustered-programming}.
The authors formalise the semantics of the language with a small-step
operational semantics, define a bisimulation, and establish that X10
programs without conditional atomic blocks do not deadlock.
Lee and Palsberg present a core model for X10, an imperative language
augmented with async/finish and atomic blocks suited for
inter-procedural analysis through type
inference~\cite{lee.palsberg:fx10}.
The authors present an operational semantics based
on~\cite{saraswat.jagadeesan:concurrent-clustered-programming} and a
type system that identifies may-happen-parallelism, further explored
in~\cite{agarwal:mhp-x10}.
Java features, since version 5, cyclic barriers,
\texttt{java.util.concurrent.CyclicBarrier}. Unlike X10 clocks these
barriers have a fixed set of participants, defined at initialisation
time.

This paper constitutes an archival version (post-proceedings) of a
workshop paper produced in late 2009.
Three of the language extensions proposed at that time have been
incorporated in the X10 language
definition~\cite{saraswat:x10-report}, further discussed in
Section~\ref{sec:type-system}.
They are:
\begin{itemize}
\item Aliasing, introduced in version 2.01, January 2010. Prior
  versions imposed a restriction on how clock values could be aliased,
  ``The initializer for a local variable declaration of type
  \texttt{Clock} must be a new clock expression. Thus X10 does not
  permit aliasing of clocks'' (page 123, version 2.00).
\item Clocks can be transmitted to a spawned activity if they are
  created in the scope of the enclosing finish, introduced in version
  2.05, July 2010.
  Before we could read, ``While executing \texttt{S} [the body of a
  finish], an activity must not spawn any \texttt{clocked}
  asyncs''\footnote{In X10 a \texttt{clocked} async corresponds to an
    activity registered with at least one clock upon
    creation.}  (page 141, version 2.04). Shirako
  \etal included this extension even prior to our
  work~\cite{shirako.peixotto.etal:phasers}.

\item Resume state inheritance, introduced in version 2.05, July
  2010. Resumed clocks can be transmitted to forked activities.
  Prior to version 2.05 we could read, ``It is a static error if any
  activity has a potentially live execution path from a
  \texttt{resume} statement on a clock~\texttt{c} to a async spawn
  statement (which registers the new activity on~\texttt{c}) unless
  the path goes through a \texttt{next} statement''  (page 153, version
  2.04).

\end{itemize}

In summary, the contributions of this work are:
\begin{itemize}
\item a simple operational semantics for activities, finish, and
  clocks that allows to better understand these constructs,
\item a type systems allowing to prove safety and progress
  properties (alternative to the constraint-based
  system~\cite{saraswat.jagadeesan:concurrent-clustered-programming}),
  and
\item the promise of a more flexible utilization of the clock
  constructs.

\end{itemize}
The rest of this paper presents the syntax in
Section~\ref{sec:syntax}, the (operational) semantics and the notion
of run-time errors in Section~\ref{sec:semantics}, the type system and
some examples in Section~\ref{sec:type-system}, and the main result in
Section~\ref{sec:results}.
We conclude, in Section~\ref{sec:further-work}, discussing an
alternative model for the semantics and pointing directions for future
work.


\begin{figure}
  \begin{equation*}
    \begin{aligned}
      e & \grmeq                   & & \text{\emph{Expressions}}\\
      & \grmin v                 & & \text{value}\\
      & \grmor \async {\vec v} e & & \text{fork activity}\\
      & \grmor \newClock         & & \text{new clock}\\
      & \grmor \drop v           & & \text{deregister from } v\\
      & \grmor \finish e         & & \text{wait to terminate}\\
      & \grmor \nextk            & & \text{advance phase}\\
      & \grmor \resume v         & & \text{ready to advance on } v\\
      & \grmor \lets x e e       & & \text{local declaration}\\
    \end{aligned}
    \qquad \qquad
    \begin{aligned}
      v & \grmeq                                  & & \text{\emph{Values}}\\
      & \grmin x                 & & \text{variable}\\
      & \grmor \unitV                           & & \text{unit}
     \\
      \\
      \\
      \\
      \\
     \\
     \\
   \end{aligned}
  \end{equation*}
  \caption{Top-level syntax of \xtenclocks}
  \label{fig:syntax}
\end{figure}


\section{Syntax}
\label{sec:syntax}

Object-oriented and type-safe, X10 provides for support for
concurrency, parallelism, and distribution.
Of particular interest to us is the finish synchronisation mechanism
that waits for the termination of parallel computations and the clock
primitive that allows forcing multiple concurrent activities to wait
for each other at certain points in time.

The top-level, or programmer's, language we address, \xtenclocks (X10
restricted to clocks and finish), is a subset of the X10 language,
generated by the grammar in Figure~\ref{fig:syntax}, and relies on a
base set of \emph{variables} ranged over by~$x$.
An \xtenclocks program is an expression~$e$ that can operate on
activities, clocks, or unit~$\unitV$ values.
To construct programs we compose expressions through the standard let
construct~$\lets x {e_1} {e_2}$, which binds variable~$x$ to the
result of expression~$e$ in the scope of expression~$e'$.

Below we present an example program with the purpose of illustrating
the syntax and informally presenting the semantics of the language.
The example is composed of three activities an outermost activity
$a_1$, defined from line~2 to line~16, an inner activity $a_2$,
spawned at line~4, and another inner activity $a_3$, spawned at line~5
and lasting until line~11.
Along the example we make use of the derived expression~$e_1;e_2$ that
abbreviates~$\lets x {e_1} {e_2}$, where $x$ not free in $e_2$.

\begin{lstlisting}[numbers=right]
  //activity a1
  finish
    let x = makeClock in (
      finish async e0; //activity a2
      async x ( //activity a3
        e1;
        resume x;
        e2;
        next;
        e3;
        drop x);
      e4;
      resume x;
      next;
      e5;
      drop x)
\end{lstlisting}

Activities can be \emph{registered} with zero or more clocks and may
share clocks with other activities.
A clock can thus count with zero or more different registered
activities, which are also called \emph{participants}.
When an activity~$a$ is registered with clock~$x$, we say that~$x$ is a
clock \emph{held} by~$a$.
Activities may only register themselves with clocks via two different
means: when they explicitly create a clock (line~3, \newClock{},
creates a clock and registers activity $a_1$ with the clock), and when
they inherit clocks from its parent activity in a spawning operation
(line~5 spawns activity~$a_3$ and registers it with the clock
associated with variable~$x$).
Expression~\lstinline|drop| deregisters an activity from a clock
(line~11, $\drop x$ deregisters activity~$a_3$ from clock~$x$,
whereas line~16 deregisters activity~$a_1$ from~$x$).
%
%
Activities are disallowed to manipulate clocks they are not
registered with.

In \xtenclocks, an activity synchronises via two different methods: by
waiting for every activity spawned by expression~$e$ to terminate
(line~4 only terminates when activity~$a_2$ and all its sub-activities
terminate), and by waiting for its held clocks to \emph{advance a
  phase} (activity~$a_3$ waits at line 9, whereas 
activity~$a_1$ waits at line~14). 
%
X10 distinguishes between \emph{local} and \emph{global} termination
of an expression.  Local termination of an expression corresponds to
concluding its evaluation (reducing to a value $v$).  An expression
terminates globally when it terminates locally and each activity
spawned by the expression has also terminated globally.
Expression~$\finish e$ converts the global termination of
expression~$e$ into a local termination (line~2 waits for the
expression in lines~3--16 to terminate, meaning that it waits as well
for the termination of activities $a_2$ and $a_3$; line~4 waits for
activity~$a_2$ (the result of the evaluation of~$\async {e_0} {} \!$)
to terminate before launching the sub-activity~$a_3$, in line~5).
The second method to synchronise activities is using clocks.
Groups of activities, defined by the participants of a clock, 
evaluate concurrently until they reach the end of a phase.
When every participant of the group reaches the end of the phase, then
all move to the next phase, while still executing concurrently.
Phases are delimited by expression~\nextk; activities evaluate this
expression to mark the end of a phase (activities $a_1$ and $a_3$ 
synchronise at lines 9 and 14).

An activity can inform all other participants of a
clock~$x$ that it has completed its phase by using an expression of
the form~$\resume x$, thus making clocks act as fuzzy
barriers~\cite{gupta:fuzzy-barrier} (line 7).
Expression~\lstinline|resume| can be viewed as an optimisation to diminish
contention upon advancing a phase: it allows activities blocked
on~\nextk{} to cease waiting for such activities (which can become
at most one phase behind the clock's phase). 
In the example, expression $e_2$ might execute at the same time as
expression $e_5$, since activity $a_3$ may trigger
(at line~5) activity $a_1$ to advance clock~$x$ (blocked at line~11),
thus evaluating expressions $e_2$ and $e_5$ in parallel.
If we omit expression~$\resume x$ from this example, then
expressions~$e_2$ and $e_5$ cannot evaluate in parallel.

Clocks can be implemented via a (sophisticated) $n$-ary synchronisation
mechanism that includes a natural number representing its \emph{global
  phase}, initially set to zero.
Advancing a clock's phase amounts to incrementing its global phase
when every registered activity has \emph{quiesced}; an activity is
quiescent on a clock~$x$ after performing a~$\resume x$.
An activity resumes all its held clocks together by
evaluating~\nextk{} and suspends itself until these clocks become
ready to advance to the next phase.



\section{Operational Semantics}
\label{sec:semantics}

This section describes the operation semantics of our language, a
possible execution of the example described in
Section~\ref{sec:syntax}, and the notion of run-time errors.

\paragraph{Operational Semantics}

\begin{figure}
  \begin{equation*}
    \begin{aligned}
      S & \grmeq \state H A         & & \text{\emph{States}}\\
      H & \grmeq \{c_1 \colon h_1, \dots, 
      c_n \colon h_n\}              & & \text{\emph{Heaps}}\\
      A & \grmeq \{l_1 \colon a_1,\dots,l_n \colon a_n\}
      & & \text{\emph{Sets of named activities}}\\
      h & \grmeq \heapClock p R Q   & & \text{\emph{Clock values}}\\
      R,Q & \grmeq \{l_1,\dots,l_n\}& & \text{\emph{Sets of activity names}}\\
      a & \grmeq \activity V e A    & & \text{\emph{Activities}} \\
      V & \grmeq \{c_1 \colon p_1,\dots, c_n \colon p_n\}
      & & \text{\emph{Clocks' local views}}
    \end{aligned}
    \qquad \qquad
    \begin{aligned}
      e & \grmeq \dots          & & \text{\emph{Expressions}}\\
      & \grmor \join l          & & \text{join activity}\\
      \\
      v & \grmeq \dots           & & \text{\emph{Values}}\\
      & \grmor c               & & \text{clock}\\
      \\
      p & \grmeq 0 \grmor 1 \grmor 2 \grmor \dots
      & & \text{\emph{Phases}}\\
    \end{aligned}
  \end{equation*}
  \caption{Run-time syntax of \xtenclocks}
  \label{fig:run-time-syntax}
\end{figure}


Figure~\ref{fig:run-time-syntax} depicts the run-time syntax of our
language.
The run-time system relies on one additional set, \emph{clock names}
(or heap addresses, since clocks are the only data structures we
allocate in the heap), ranged over by~$c$.
A state~$S$ of an \xtenclocks computation comprises a shared
heap~$H$ and a set of named activities~$A$ that run concurrently.
Activity names,~$l$, are taken from the set of variables introduced in
Section~\ref{sec:syntax}.
The heap stores clock values~$h$, triples comprising a natural
number~$p$ representing its global \textbf{p}hase, a set~$R$ with the \textbf
registered activities, and another set~$Q$ with the \textbf quiesced
activities.
These sets keep track of the activities that synchronise in the clock
($R$) and the activities that are ready to advance the clock to the
next phase ($Q$).
Set difference~$R\setminus Q$ identifies the activities yet to make
progress on a clock; the clock phase only advances when all activities
have quiesced (when~$R\setminus Q = \emptyset$, equivalently~$R = Q$).
The set of registered activities~$R$ also allows to enforce that an
activity only operates on registered clocks.

An activity~$a$ is composed of a set of clocks' local views~$V$, an
expression~$e$ under execution, and a set of sub-activities~$A$.
Each activity has its own perception of the global phase of a clock;
the clocks' local view~$V$ is a map from clock names to natural numbers
describing the local phase.
The global phase of a clock and that local to one of its activities
may diverge in case the activity issues a \resumek{} on the clock.
%
%
Only when the activity issues a \nextk, the local view of
the clock and the global phase become in sync.
At anytime, a clock's local view is at most one phase behind the
global phase.
Notice that an activity is itself a tree of activities, since each
activity holds a set of (named) sub-activities.
When evaluating an expression~$\finish e$, the 
activity starts sub-activities for evaluating expression~$e$ 
and all activities spawned by~$e$.
Otherwise, activities have no sub-activities.

We augment the syntax of expressions at run-time with~$\join l$.
%
%
Expression~$\join l$ results from evaluating~$\finish e$; label~$l$
identifies the activity that executes the body~$e$ of the~$\finishk$
expression and that produces the resulting value of the~$\finish e$
expression.


\begin{figure}[t]
  \begin{gather*}
    \tag \Rasync
    \frac{
      \{\vec c\} \subseteq \dom V 
      \qquad
      l' \text{ is fresh}
      \qquad
      Q' \eqdef \text{if } l \in Q \text{ then } Q \cup \{l'\} \text{ else } Q
    }{
      \state
      {H\update c {\heapClock p R Q}_{c \text{ in } \vec c}}
      {\activity V {\async {\vec c} e} {A}}
      \rightarrow_l
      H \update c {\heapClock {p} {R  \cup \{l'\}} {Q'}}_{c \text{ in } \vec c};
      \update {l'} {\activity {\update c {p}_{c \text{ in } \vec c}} {e} \emptyset};
      \activity V \unitV {A}
    }
    \\
    \tag \Rmake
    \frac{
      c \text{ is fresh}
    }{
      \state H {\activity V \newClock {A'}}
      \rightarrow_l
      H \update c {\heapClock 0 {\{l\}} \emptyset};
      \emptyset;
      \activity {V\update c 0} c {A'}
    }
    \\
    \tag \Rresume
    \frac{
      Q' \eqdef \text{if } p = V(c) \text{ then } Q\cup\{l\} \text{ else } Q
      \qquad
      l \notin Q 
    }{
      \state {H\update c {\heapClock p R Q}} {\activity V {\resume c} {A}}
      \rightarrow_l
      H\update c {\heapClock p R {Q'}};
      \emptyset;
      \activity V \unitV {A}
    }
    \\
    \tag \Rnext
    \frac{
      \begin{array}{c}
        C_1 \eqdef
        \{c \mid V(c) = p, H(c) = \heapClock {p} {R} {R}\}
        \quad
        C_2 \eqdef
        \{c \mid V(c) = p, H(c) = \heapClock {p + 1} {\_} {\_}\}
        \quad
        C_1 \cup C_2 = \dom V
      \end{array}
    }{
      \state H {\activity V \nextk {A}}
      \rightarrow_l
      H \update {c}{\heapClock{p + 1} {R} {\emptyset}}_{c \in C_1};
      \emptyset;
      \activity {\update{c} {V(c) + 1}_{c \in V}} \unitV {A}
    }
    \\
    \tag \Rdrop
    \frac{
      c \in\dom V
      \qquad
      H' \eqdef \text{if } R = \{l\} \text{ then } H \setminus \{c\} \text{
       else } H \update c {\heapClock p {R \setminus \{l\}} {Q \setminus
          \{l\}}}
    }{
      \state {H\update c {\heapClock p R Q}} {\activity V {\drop c} {A}}
      \rightarrow_l
      H';
     \emptyset;
      \activity {V \setminus \{c\}} \unitV {A}
    }
    \\
    \tag {\Rfinish}
    \frac{
      l_0 \text{ is fresh}
    }{
      \state H {\activity V {\finish e} {A}}
      \rightarrow_l
      H;
      \emptyset;
      \activity V {\join {l_0}} {A\update {l_0} {\activity V e \emptyset}}
    }
    \\
    \tag {\Rjoin}
    \state H
    {\activity V {\join l_0}
      {\{l_0\colon \activity \emptyset {v_0} \emptyset, \dots, 
        l_n\colon \activity \emptyset {v_n} \emptyset \}}}
    \rightarrow_l
    H;
    \emptyset;
    \activity \emptyset {v_0} \emptyset
  \end{gather*}
  \caption{Reduction rules for activities
    \hfill \fbox{$\state Ha \rightarrow_l H;A;a$}}
  \label{fig:reduction-rules-activity}
\end{figure}


\begin{figure}[t]
  \begin{gather*}
   \tag {\Rletval}
    \state H {A \update l {\activity V {\lets x v e} {A'}}}
    \rightarrow
    \state H {A \update l {\activity V {e \subst v x} {A'}}}
    \\
    \tag {\Rlet}
    \frac{
      l \in \dom H
      \qquad
      \state H {\activity V e {A}}
      \rightarrow_l
      {H'}; A'''; {\activity {V'} {e'} {A'}}
    }{
      \state H {A'' \update l {\activity V {\lets x e {e''}} {A}}}
      \rightarrow
      \state {H'} {A'' \update l {\activity {V'} {\lets x {e'} {e''}} {A'}}, A'''}
    }
    \\
    \tag {\Ractivity}
    \frac{
      \state H {A'}
      \rightarrow
      \state {H'} {A''}
    }{
      \state H {A \update l {\activity V e {A'}}}
      \rightarrow
      \state {H'} {A \update l {\activity V e {A''}}}
    }
  \end{gather*}
  \caption{Reduction rules for states
    \hfill \fbox{$\state HA \rightarrow \state HA$}}
  \label{fig:reduction-rules-states}
\end{figure}


We present the small step reduction rules for \xtenclocks{} in
Figures~\ref{fig:reduction-rules-activity}
and~\ref{fig:reduction-rules-states}.
Reduction for activities (Figure~\ref{fig:reduction-rules-activity}),
$H; a \rightarrow_l H'; A; a'$, operates on a heap~$H$ and an
activity~$a$, and produces a possible different heap~$H'$, a set~$A$
of activities spawned during the evaluation of the expression in~$a$,
and a new activity~$a'$. Label~$l$ is the name of activity~$a$.

Rule~$\Rasync$ is the only rule that affects the set of spawned
activities~$A$. The programmer specifies a list of clocks~$\vec c$ on
which the new activity is to be registered with.
The newly created activity (named~$l'$) is added to the set~$R$ of
activities registered with clocks~$\vec c$, and stored in the heap.
The result of spawning an activity is the unit value~$\unitV$.
The created activity is composed of a clock view holding, for
each clock~$c$ in~$\vec c$, a copy of the global phase~$p$, 
an expression~$e$ to be evaluated, and an empty set of sub-activities.
Syntax~$H\update{c}{h}$ describes a heap~$H'$ with a distinguished
entry~$c\colon h$; formally~$H'(c) = h$ if~$c = c'$ else~$H(c)$.
The new activity inherits each clock~$c$ quiescence property, 
\ie if~$l$ is quiescent on clock~$c$ so is~$l'$ ($Q' 
=\text{if }l \in Q \text{ then } Q\cup\{l'\} \text{ else } Q$).

The language specification reads ``Clocks are created using a factory
method on \texttt{x10.lang.Clock}. The current activity is
automatically registered with the newly created clock''~\cite[page
207]{saraswat:x10-report}.
Expression~$\newClock$ creates a new clock in the heap with
initial phase~$0$, with~$l$ as the only registered activity, and
with no resumed activities, $\heapClock 0 {\{l\}} \emptyset$.
The activity creating the clock maintains a local
clock view~$\{c\colon0\}$ stored in~$V$.
Regarding expression~$\resumek$ the language specification reads ``An
activity may wish to indicate that it has completed whatever work it
wishes to perform in the current phase of a clock \texttt{c} it is
registered with, without suspending altogether'' (page 208).
Rule~$\Rresume$ asserts that when the~$l$-labelled activity issues
a~$\resume c$, its label is recorded in the set of resumed
activities~$R$ if the clock local phase is in sync with the clock
global phase~($p = V(c)$);
otherwise, the effect of the expression is discarded~($p \neq V(c)$), 
since the clock has already advance to the next phase.
An activity may only perform a~$\resumek$ operation per clock per
phase~($l \notin Q$).

The language specification reads ``Execution of this statement
[\texttt{next}] blocks until all the clocks that the activity is
registered with (if any) have advanced. (The activity implicitly
issues a resume on all clocks it is registered with before
suspending.) [. . .]. An activity blocked on next resumes execution
once it is marked for progress by all the clocks it is registered
with'' (page 209).
In our model, for simplicity's sake, the programmer must issue
a~$\resumek$ on all held clocks before expression~$\nextk$.
As rule~$\Rnext$ states, expression~$\nextk$ blocks the activity until
all clocks have been resumed ($C_1$) or have already advance their
phases ($C_2$).
Notice that when activities are waiting on a clock~$c$, the clock can
be in one of three states: (a) there are non-quiescent activities on
the clock and~$c$ is neither a member of~$C_1$ nor of~$C_2$; (b) all
registered activities are quiescent on the clock, and so~$c$ is a
member of~$C_1$; (c) the clock has advanced to the next phase thus
becoming a member of~$C_2$.
When an activity advances a clock global phase, it stops being a member
of set~$C_1$ and becomes a member of set~$C_2$ for the remaining 
activities waiting on that clock.
Since rule~$\Rnext$ only updates the clock phase of those clocks
belonging to~$C_1$ ($H \update c {\heapClock {p + 1} R \emptyset}_{c
  \in C_1}$) it ensures that the global clock state is updated only
once.

The language specification reads ``An activity may drop a clock by
executing \texttt{c.drop()}. The activity is no longer considered
registered with this clock'' (page 209).
With expression~$\drop c$, the~$l$-labelled activity cedes its control
over clock~$c$: we remove~$c$ from clock view~$V$, and remove activity
identifier~$l$ from both sets~$R$ and~$Q$.
Two consequences of dropping a clock~$c$ are: a) activities waiting on
clock~$c$ are no longer blocked because of this activity; b) when
executing a~$\nextk$ expression, this activity no longer waits for
clock~$c$.
In case~$l$ is the only activity registered with clock~$c$, it is safe
to deallocate the clock, so that the clock's heap space can be
reclaimed without resorting to garbage collection.
The language specification reads ``An activity A executes finish S by
executing S and then waiting for all activities spawned by S (directly
or indirectly [. . .]) to terminate'' (page 196).
Expression~$\finish e$ creates a child activity and evaluates into
expression~$\join {l_0}$ (rule~$\Rfinish$), which in turn blocks while
there exist sub-activities running.
When all sub-activities have reduced to a value, activity~$l$ ($\join
{l_0}$) evaluates to the value in its sub-activity~$l_0$ and garbage
collects all other sub-activities (rule~$\Rjoin$).

The reduction for states (Figure~\ref{fig:reduction-rules-states}), $S
\rightarrow S'$, allows for non-deterministic choice of which activity
$l$ to evaluate (rule~$\Ractivity$), capturing the
concurrency present in X10 computations.
We evaluate the let binding from left-to-right (rule~$\Rlet$), when
the left-hand-side expression becomes a value, we substitute this
value for variable~$x$ in the continuation expression~$e$
(rule~$\Rletval$).

\paragraph{Example}

Recall the example from Section~\ref{sec:syntax}.
Consider a loading function that sets up the initial state from a
given expression, which in this case is~$S_0$, an empty heap and an
activity evaluating the code in the example under a dummy~$\letk$.
State~$S_0$ reduces in two steps, using rules~$\Rfinish$ and $\Rlet$.
The grey boxes highlight a redex and also the corresponding
contractum.
\begin{equation*}
S_0 = \emptyset; \{l_1 : \activity \emptyset {\lets z {\underbrace{\annotation{{\finish {\lets x \newClock {( \finish {(\asynck\ e_0)}; e_6)} }}}}_{\text{Example from Section~\ref{sec:syntax}}}} \unitV} \emptyset \}
\end{equation*}
where~$e_6$ is $(\async x {(e_1;\resume x;e_2;\nextk;e_3;\drop x)});
\resume x;e_4;\nextk;e_5;\drop x$.
We perform two reduction steps to illustrate the effect of
expression~$\finishk$ on the sub-activities of~$l_1$.
\begin{equation*}
 \emptyset;  \{l_1:
 (\emptyset,
 {\lets z {\annotation{\join {l_2}}} \unitV},
 {\{\annotation{
  l_2: \activity \emptyset {\lets x \newClock {( \finish {(\asynck\ e_0)}; e_6)}} \emptyset
 }\}})
\}
\end{equation*}

From this state on, while~$\joink$ remains blocked, we apply
rule~$\Ractivity$ to evaluate the child activities of~$l_1$.
We perform four further reduction steps~($\Ractivity$, $\Rlet$,
$\Rmake$, and $\Rletval$) and observe how expression~$\newClock$
updates the heap and the clock view of activity~$l_2$.
\begin{equation*}
{\{\annotation{c\colon\heapClock 0 {\{l_2\}} {\emptyset}}\}}
;
\{l_1\colon
 (
 \emptyset,
 {\lets z {\join {l_2}} \unitV},
%
 \{
  l_2\colon ( {\{\annotation{c\colon0}\}},
    {\lets x {\annotation c} {( \finish {(\asynck\ e_0)}; e_6)}}, \emptyset)
 \}
)
\}
\end{equation*}

The non-determinism of our semantics now allows for various different
reductions.
A possible outcome of a (multi-step) reduction is
\begin{align*}
\{c:\heapClock 0 {\{l_2,l_3\}} {\emptyset} \}
;
&
\{l_1: (\emptyset, {\lets z {\join {l_2}} \unitV},
\{
  l_2: \activity {\{c:0\}} {(\annotation{\resume c};\nextk;e_5;\drop c)}
  \emptyset
  ,\\ 
  & \phantom{\{l_1: (\emptyset, {\lets z {\join {l_2}} \unitV},\{} l_3: \activity {\{c:0\}} {(\resume c;e_2;\nextk;e_3;\drop c)}
  {\emptyset}
 \})\}
\end{align*}

We now illustrate the case when activities~$l_2$ and~$l_3$ are
evaluating expressions~$e_2$ and~$e_5$ concurrently.
\begin{align*}
&\Ractivity,\Rlet\rightarrow \Rresume \rightarrow_{l_2} \Ractivity, \Rletval \rightarrow
\\
&
\begin{aligned}
  \{c:\heapClock 0 {\{l_2,l_3\}} {\{\annotation{l_2}\}} \} ; \{l_1:
  (\emptyset, {\lets z {\join {l_2}} \unitV}, \{&l_2: \activity
  {\{c:0\}} {(\nextk;e_5;\drop c)} \emptyset ,
  \\
  & \hspace{-2em}
  l_3: \activity {\{c:0\}} {(\annotation{\resume
      c};e_2;\nextk;e_3;\drop c)} {\emptyset} \})\}
\end{aligned}
\\
&
\Ractivity,\Rlet\rightarrow \Rresume \rightarrow_{l_3}
\\
&
\begin{aligned}
  \{c:\heapClock 0 {\{l_2,l_3\}} {\{l_2,\annotation{l_3}\}} \} ;
  \{l_1: (\emptyset, {\join {l_2}}, \{& l_2: \activity
  {\{c: 0\}} {(\annotation{\nextk};e_5;\drop c)} \emptyset
  ,
\\
&
l_3: \activity {\{c:0\}} {(\annotation{\unitV};e_2;\nextk;e_3;\drop c)}
  {\emptyset} \})\}
\end{aligned}
\\
&\Ractivity,\Rlet\rightarrow \Rnext \rightarrow_{l_2}
\\
&
\begin{aligned}
  \{c:\heapClock 1 {\{l_2,l_3\}} {\annotation \emptyset} \} ; \{l_1:
  (\emptyset, {\lets z {\join {l_2}} \unitV}, \{& l_2: \activity
  {\{c:\annotation 1\}} {(\annotation{\unitV};e_5;\drop c)} \emptyset
  ,\\
  &l_3: \activity {\{c:0\}} {(\unitV;e_2;\nextk;e_3;\drop c)} {\emptyset}
  \})\}
\end{aligned}
\end{align*}

After activity~$l_3$ evaluates~$\resume c$, activity~$l_2$, which is
blocked evaluating~$\nextk$, progresses, thus allowing
expressions~$e_2$ and~$e_5$ to execute in parallel.
%
Notice that activity~$l_2$ remains in phase~0, while activity~$l_3$ is
in phase~1.

\newcommand{\rteSpc}{\ \ }
\begin{figure}
  \begin{align*}
    \tag{\Easync}
    \state H {A\update l {\activity V {\lets x {\async {\vec c} e} {e'}} {A'} }} \in\err
    &\rteSpc
    \text{ if } \vec c \not \subseteq \dom H \text{ or } c \not \in \dom V
    \\
    \tag{\Eresume}
    \state {H\update{c}{\heapClock {\_} {\_} Q}} {A\update l {\activity V {\lets x {\resume c} e} {A'} }}\in\err
    &\rteSpc
    \text{ if } l \in Q \text{ or } c \not \in \dom H \text{ or } c \not \in \dom V
    \\
    \tag{\Edrop}
    \state H {A\update l {\activity V {\lets x {\drop c} e} {A'} }}\in\err
    &\rteSpc
    \text{ if } c \not \in \dom H \text{ or } c \not \in \dom V
    \\
   \state H {A\update l {\activity V {\lets x \nextk e} {A'}}} \in\err &
    \rteSpc
    \text{ if } V(c) = p, H(c) = \activity p \_ Q, \text{
      and } 
    \\
    \tag{\EnextI}
    &\rteSpc \;\; l \not \in Q, \text{ for some } c
    \\
    \tag{\EnextII}
   \state H {A\update l {\activity V {\lets x \nextk e} {A'}}}  \in\err
   &\rteSpc
    \text{ if } c \in \dom V \text{ and } 
c \notin \dom H, \text{ for some } c
    \\
    \tag{\Eact}
    \state H {A\update l {\activity V {v} \_}}  \in \err &\quad\!\!
    \text{ if } V \neq \emptyset
    \\
    \tag{\EactSet}
    \frac{
      \state H {A'} \in \err
    }{
      \state H {A\update l {\activity \_ \_ {A'}}} \in \err
   }
\end{align*}
\caption{The set~$\err$ of run-time errors}
  \label{fig:run-time-errors}
\end{figure}


\paragraph{Run-time errors}

Run-time errors is the smallest set $\err$ of states generated by the
rules in Figure~\ref{fig:run-time-errors}. The notion is consistent
with all the conditions documented to raise exception
\texttt{ClockUseException}, as discussed in the X10 language
specification report~\cite{saraswat:x10-report}.
The type system we present in Section~\ref{sec:type-system} allow us
to reject, at compile time, programs that can potentially throw a
\texttt{ClockUseException}.
%

During an \asynck{} operation, an activity cannot transmit
unregistered clocks through its first argument
(rule~{\Easync}). Similarly, activities can only perform \resumek{} or
\dropk{} operations on clocks they are registered with
(rules~\Eresume{} and~\Edrop).
In particular, it constitutes an error for an activity to drop a clock
twice, or to resume a clock more than once (for the same phase) or
after dropping it.
We achieved a fine grained control over the clocks an activity is
registered with.
Specifically, it is possible to devise, at compile time, whether an
activity resumed or dropped all of its held clocks, as manifest from
our typing rules later.
Therefore, it constitutes an error when an activity evaluates a
$\nextk$ expression before resuming all its clocks (rule~$\EnextI$).
It is also an error when upon evaluating a~$\nextk$ there is a held
clock that is not in the heap (rule~$\EnextII$).
Furthermore, an activity cannot evaluate to a value without dropping
all of its clocks (rule~$\Eact$).
Rule~$\EactSet$ allows error propagation from sub-activities.

Earlier versions of the language specification report (until version
2.05) included two additional error conditions we quote:
\begin{itemize}
\item ``It is a static error if any activity has a potentially live
  execution path from a \texttt{resume} statement on a clock
  \texttt{c} to a async spawn statement (which registers the new
  activity on \texttt{c}) unless the path goes through a \texttt{next}
  statement'' (page 153, version 2.04). (See example 2,
  Section~\ref{sec:type-system}.);
\item ``While executing \texttt{S} [the body of a finish], an activity
  must not spawn any \texttt{clocked} asyncs. (Asyncs spawned during
  the execution of \texttt{S} may spawn \texttt{clocked} asyncs.)''
  (page 141, version 2.04).
\end{itemize}

Our type system guarantees soundness in presence of these two
conditions.


\begin{figure}[t]
   \begin{align*}
      \tau & \grmeq 
      \unitk \grmor \clockk(\alpha)
      & & \textit{Types}
    \end{align*}
 \caption{Syntax of types}
  \label{fig:syn-types}
\end{figure}


\begin{figure}[t]
 \begin{gather*}
    \tag{\Twfc, \Twfu}
    \mathcal R, \alpha \vdash \clockT \alpha
    \qquad
    \mathcal R \vdash \unitT
   \\
   \tag{\Tvar, \TclockR, \Tunit}
   \frac{
     \mathcal R \vdash \tau
   }{
     \Gamma,x\colon\tau; \mathcal R
     \vdash x \colon \tau 
   }
   \qquad
     \Gamma,c\colon\clockT \alpha; \mathcal R, \alpha
     \vdash c \colon \clockT \alpha 
   \qquad
   \Gamma; \mathcal R
   \vdash \unitV \colon \unitT
   \\
   \tag{\TclockSeq}
   \frac{
     \Gamma; \mathcal R \vdash v_1 \colon \clockT {\alpha_1} 
     \quad
     \cdots
     \quad
     \Gamma; \mathcal R \vdash v_n \colon \clockT {\alpha_n} 
     \qquad 
     \alpha_i \neq \alpha_j, \text{ if } i \neq j 
     \qquad
     \alpha_i \text{ not in } \Gamma
   }{
     \Gamma; \mathcal R \vdash v_1 \dots v_n \colon \{\alpha_1
     , \dots, \alpha_n\}
   }
 \end{gather*}
 \caption{Typing rules for values and for well-formed types}
 \label{fig:typing-values}
\end{figure}


\begin{figure}[t]
 \begin{gather*}
    \tag{\Tvalue, \Tmake}
    \frac{
      \Gamma; \mathcal R \vdash v \colon \tau
    }{
      \Gamma; \mathcal R; \mathcal Q
      \vdash v \colon (\tau, \mathcal R, \mathcal Q) 
    }
    \qquad
    \frac{
      \alpha \text{ is fresh}
    }{
      \Gamma; \mathcal R; \mathcal Q
      \vdash \newClock \colon (\clockT{\alpha}, \mathcal R \cup
      \{\alpha\}, \mathcal Q)
    }
    \\
   \tag{\Tresume, \Tdrop}
    \frac{
      \Gamma; \mathcal R
      \vdash v \colon \clockT \alpha
      \qquad
      \alpha \notin \mathcal Q
    }{
      \Gamma; \mathcal R; \mathcal Q
      \vdash \resume v \colon (\unitk, \mathcal R; \mathcal Q \cup \{\alpha\})
    }
    \qquad
    \frac{
      \Gamma; \mathcal R
      \vdash v \colon \clockT \alpha
    }{
      \Gamma; \mathcal R; \mathcal Q
      \vdash \drop v \colon (\unitk, \mathcal R \setminus \{\alpha\}, \mathcal Q \setminus \{\alpha\})
    }
    \\
   \tag{\Tasync,\Tnext}
   \frac{
     \Gamma; \mathcal R
     \vdash \vec v \colon \mathcal R'
     \qquad
     \Gamma; \mathcal R'; \mathcal {Q} \cap \mathcal R'
     \vdash e \colon (\_, \emptyset, \emptyset)
   }{
     \Gamma; \mathcal R; \mathcal Q
     \vdash \async {\vec v} e \colon (\unitk, \mathcal R; \mathcal Q)
   }
   \qquad
   \Gamma; \mathcal R; \mathcal R
    \vdash \nextk \colon (\unitk, \mathcal R; \emptyset)
    \\
   \tag{\Tfinish}
   \frac{
     \Gamma; \emptyset; \emptyset \vdash e \colon (\tau,
     \emptyset, \emptyset)
   }{
     \Gamma; \mathcal R; \mathcal Q
     \vdash \finish e \colon (\tau, \mathcal R; \mathcal Q)
   }
  \\
   \tag{\Tlet}
   \frac{
     \Gamma; \mathcal R; \mathcal Q \vdash e_1 \colon (\tau,
     \mathcal{R'}, \mathcal{Q'})
     \qquad
     \Gamma, x \colon \tau; \mathcal{R'}; \mathcal{Q'} \vdash
     e_2 \colon (\tau', \mathcal{R''}, \mathcal{Q''})
   }{
     \Gamma; \mathcal R; \mathcal Q
     \vdash \lets x {e_1} {e_2} \colon (\tau', \mathcal {R''},
     \mathcal {Q''})
   }
 \end{gather*}
 \caption{Typing rules for expressions}
 \label{fig:typing-expressions}
\end{figure}


\section{Type System}
\label{sec:type-system}

This section presents a type system that uses \emph{singleton types}
to track clock usage throughout a program.


For types we rely on an additional base set of singleton types ranged
over by $\alpha$.
The syntax of types, depicted in Figure~\ref{fig:syn-types},
introduces the type \unitT{} of unit values, and the type $\clockT
\alpha$ of a \emph{particular} clock. We assign a different type to
each clock in order to ensure the correct usage of the clock
constructs within a program.

The type system for \xtenclocks{} programs is defined in
Figures~\ref{fig:typing-values} and \ref{fig:typing-expressions}.
A typing~$\Gamma$ is a map from variables (or activity labels) and
clocks to types.
We write $\dom \Gamma$ for the domain of $\Gamma$. When $x \not \in
\dom \Gamma$ we write $\Gamma, x \colon \tau$ for the typing $\Gamma'$
such that $\dom \Gamma' = \dom \Gamma \cup \{x\}$, $\Gamma'(x) =
\tau$, and $\Gamma'(y) = \Gamma(y)$ for $y \neq x$.
%
%
The type system also uses sets of singleton types, ranged over by
$\mathcal R$, for \textbf registered clocks, and~$\mathcal Q$, for
\textbf quiescent clocks.


The typing rules for values and for well formed types
(Figure~\ref{fig:typing-values}) are simple to follow. Well-formed\-ness
for clock types (rule~\Twfc) ensures that activities only make use of
clocks they are registered with.
Rule~\TclockSeq{} ensures that different clocks (as those in the heap)
have distinct singleton clock types, a property that is crucial for
establishing type safety.
For typing expressions we use a type system
(Figure~\ref{fig:typing-expressions}) that records the changes made to
the set of registered clocks, either by creating or dropping clocks,
and to the set of quiescent clocks (using $\resumek$ and $\nextk$) of
an expression.
Typing judgements are of the form $\Gamma; \mathcal R; \mathcal Q
\vdash e \colon (\tau, \mathcal R', \mathcal Q')$ meaning that
expression $e$ is well typed assuming the types for the free
identifiers in $\Gamma$, the registered clocks in~$\mathcal R$, and
the quiescent clocks in~$\mathcal Q$.
The type of an expression is a triple recording the type $\tau$ of its
value, as well as the registered $\mathcal R'$ and the
quiescent~$\mathcal {Q'}$ sets after execution of the expression.

Most typing rules are straightforward.
When creating a clock (rule~\Tmake) we associate a new singleton type
$\alpha$ with the clock and include it in set of clocks registered by
the activity ($\mathcal R \cup \{\alpha\}$).
Rule~$\Tresume$, which asserts that ``an activity may invoke
\texttt{resume()} only on a clock it is registered with, and has not
yet dropped'' (page 208%
, \textit{vide} rule~\Tvar), marks clock~$\alpha$
as quiescent.
Notice that a clock cannot be resumed more than once for the same
phase~($\alpha \not \in \mathcal R$), contrary to the language
reference that reads, ``Nothing happens if the activity has already
invoked a resume on this clock in the current phase'' (page 208).
A~$\drop v$ expression removes clock~$v$ from both the sets~$\mathcal
R$ and~$\mathcal Q$, thus the clock cannot be passed to new
activities, be the target of a~$\resumek$ expression, or be dropped
again.

For expression~$\async {\vec c} e$, the language reference reads
``Starts a new activity, initially registered with clocks [$\vec v$],
and running [$e$]. The activity running this code must be registered
on those clocks'' (page 207, w.r.t.\@ rule~$\Tasync$).
Rule~$\Tasync$ asserts that when an activity spawns another
activity registered on a sequence of clocks, the quiescent property of
the clocks is preserved by propagating the information about the
quiescent clock $\alpha$ ($\mathcal {Q} \cap \mathcal R$).
Moreover, the new activity must have dropped all its clocks upon
termination, contrary to the language reference that reads, ``All
activities are automatically deregistered from all clocks they are
registered with on termination (normal or abrupt)'' (page 207).
An activity cannot share a clock it does not hold, as noted in the
language reference, ``lacking that registration, cannot register a
sub-activity on it [a clock] with async'' (page 208).
Expression $\nextk$ marks the end of a phase; it checks
that all clocks have been resumed and clears
the quiescent clocks for the new phase (rules~$\Tnext$).
%

The $\finishk$ construct may interfere with clocks and cause
programs to deadlock.
In order to avoid such situations we prevent the body of
a $\finish e$ expression ($e$) from accessing any clock already
defined, thus eliminating (nested) dependencies between clocks
and $\finishk$.
Rule~\Tfinish{} also forces $e$ to unregister from all clocks it has
created, and therefore $\finish e$ has no effect on registered and
quiescent clocks.
This follows the semantics of the current version of X10 that reads,
``Inside of \texttt{finish\{S\}}, all \texttt{clocked} asyncs must be
in the scope an unclocked async'' (page 210).
Refer to the examples below for further discussion on the deadlock 
problem.
When typing a $\letk$ expression (rule~\Tlet), its continuation~$e_2$
is typed taking into consideration the effects produced by
expression~$e_1$.
The type of the $\letk$ is that of~$e_2$, as usual.

We have deliberately deviated from the standard X10 semantics in three
cases: $\nextk$, $\dropk$, and $\resumek$. The reasons for such
deviation are: (a) to illustrate the power of singleton types in
keeping track of clocks, (b) to simplify the (operational and static)
semantics, (c) to enforce a programming discipline that may avoid
potential bugs, and (d) because the compiler has enough information to
suggest, or automatically insert, code fixes (\eg by enumerating the
clocks that need to be dropped before a \nextk; see examples below).

Below we discuss a few \xtenclocks programs and the semantic
guarantees our type system enforces.
We decorate the examples with the typing assumptions ($\Gamma,
\mathcal R, \mathcal Q$) holding for each expression.

\paragraph{Example 1: Aliasing}

Our first example concerns clock aliasing, only introduced in X10 in
version 2.01. The example may read a bit trivial but illustrates more
sophisticated aliasing situations, derived for example from procedure
calls. 
%
%
Clearly a type system with linear control like the one we are going to
present allows to relieve such a restriction.
%
%
\begin{lstlisting}[numbers=right]
  //activity a1
  let x = makeClock in (                  // {x:clock(alpha)},{alpha},emptyset
    async x ( //activity a2               // {x:lock(alpha)},{alpha},emptyset
       let y = x in (                     // {x:clock(alpha),y:clock(alpha)},{alpha},emptyset
         resume x;                        // {x:clock(alpha),y:clock(alpha)},{alpha},{alpha} 
         drop y)                          // {x:clock(alpha),y:clock(alpha)},emptyset,emptyset
    );                                    // {x:clock(alpha)},{alpha},emptyset 
    drop x)                               // emptyset,emptyset,emptyset
\end{lstlisting}
In our case the code is typable, assigning the same singleton type
$\clockT\alpha$ to both $x$ and $y$. Upon introducing variable
$x$~(line 2), it gets assigned type $\clockT \alpha$ (\textit{vide}
rules~\Tlet{} and~\Tmake). Variable~$y$~(line~4) gets assigned type
$\clockT \alpha$, the type of $x$~(again using rules~\Tlet{} and~\Tmake).

\paragraph{Example 2: Resume state inheritance}

Our second example deals with a restriction the language reference
disallowed up until version 2.04, ``A potentially live execution path
from a \texttt{resume} statement on a clock~\texttt{c} to an async
spawn statement'' (page 153, version 2.04).
%
%
Our operational semantics allows the forked activity to inherit the
\emph{resume state} (resumed/not resumed) of the parent activity
(\textit{vide} rule~$\Rasync$) and therefore preserves the quiescence
property of clocks and avoids a race condition on the clock. Notice
that the forked activity~$a_2$ inherits the quiescent state of clock
$\alpha$~(line~4), resumed at line~3.
%
%
\begin{lstlisting}[numbers=right]
  //activity a1
  let x = makeClock in (                  // {x:clock(alpha)},{alpha},emptyset
    resume x;                             // {x:clock(alpha)},{alpha},{alpha}
    async x ( //activity a2               // {x:clock(alpha)},{alpha},{alpha}
      next;                               // {x:clock(alpha)},{alpha},emptyset
      drop x                              // {x:clock(alpha)},emptyset,emptyset
     );                                   // {x:clock(alpha)},{alpha},{alpha}
    drop x)                               // {x:clock(alpha)},emptyset,emptyset
\end{lstlisting}

\paragraph{Example 3: Race condition generated by not inheriting the
  resume state}

Describes a race condition triggered by clock synchronisation.
Activity~$a_1$ creates a clock~$x$, starts a second activity $a_2$
registered with clock~$x$ that, in turn, resumes on~$x$ and starts a
third activity~$a_3$ also registered with~$x$.
\begin{lstlisting}[numbers=right]
  //activity a1
  let x = makeClock in (                  // {x:clock(alpha)},{alpha},emptyset 
    async x ( //activity a2               // {x:clock(alpha)},{alpha},emptyset
      resume x;                           // {x:clock(alpha)},{alpha},{alpha}
      async x ( //activity a3             // {x:clock(alpha)},{alpha},{alpha}
        next;                             // {x:clock(alpha)},{alpha},emptyset
        drop x                            // {x:clock(alpha)},emptyset,emptyset
       );                                 // {x:clock(alpha)},{alpha},{alpha}
       next;                              // {x:clock(alpha)},{alpha},emptyset
       drop x                             // {x:clock(alpha)},emptyset,emptyset
    );                                    // {x:clock(alpha)},{alpha},emptyset
    resume x;                             // {x:clock(alpha)},{alpha},{alpha}
    next;                                 // {x:clock(alpha)},{alpha},emptyset
    drop x)                               // {x:clock(alpha)},emptyset,emptyset
\end{lstlisting}
The race condition might occur because after $a_2$ resumes on $x$
(line~4), either activity~$a_1$ may advance clock~$x$ phase by executing
$\nextk$ (line~13) or $a_2$ may register a new activity~$a_3$
with~$x$ (line~5), blocking~$a_1$ until activity~$a_3$ executes
its $\nextk$ instruction (line~6).
By inheriting the resume status of clock~$x$, activity~$a_3$ does not
block activity~$a_1$ and the race condition disappears (\textit{vide}
rule~\Rasync{} in Figure~\ref{fig:reduction-rules-activity} and
rule~\Tasync{} in Figure~\ref{fig:typing-expressions}). 

\paragraph{Example 4: Resume after resume}

The next example deals with resuming after resuming, a pattern
accepted in X10. Rule~$\Tresume$ rejects the program below, since it
is able to determine that clock~$x$ is resumed twice.
%
\begin{lstlisting}[numbers=right]
  //activity a1
  let x = makeClock in (                  // {x:clock(alpha)},{alpha},emptyset
    resume x;                             // {x:clock(alpha)},{alpha},{alpha}
    resume x)                             // error:clock x already quiescent for activity a1
\end{lstlisting} 
%

\paragraph{Example 5: Explicit drops}

Unlike X10, we have decided to explicitly deregister activities from
clocks upon activity termination.
Our type system keeps track of the clocks an activity is registered
with, and rejects programs with activities that finish before
deregistering from all its clocks.
Clocks without registered activities can be safely garbage collected
(\textit{vide} rule~\Rdrop).
The following example fails to type check, since the launched
activity does not drop clock $x$.
The compiler may easily suggest an appropriate fix: adding a
\lstinline|drop x| after the \lstinline|next| instruction on line 5,
or even automatically introduce such an instruction.
\begin{lstlisting}[numbers=right]
  //activity a1
  let x = makeClock in (                  // {x:clock(alpha)},{alpha},emptyset
    async x (        //activity a2        // {x:clock(alpha)},{alpha},emptyset
      resume x;                           // {x:clock(alpha)},{alpha},{alpha}
      next                                // {x:clock(alpha)},{alpha},emptyset
    );                                    // error: activity a2 did not drop x 
    drop x)
\end{lstlisting}

\paragraph{Example 6: Finish/async deadlock}

Finally, we discuss the interplay among\finishk\!, $\asynck$, and
clocks, which may cause programs to deadlock. 
The following program deadlocks because activity $a_2$ is waiting on
$\nextk$ (line~6) for activity~$a_1$ to advance on $x$, which is
planned to occur at line~10, but $a_1$ is waiting on
$\finishk$ (line~3) for activity $a_2$ to terminate, so $a_1$ never
reaches line~10 and the program deadlocks.
\begin{lstlisting}[numbers=right]
  //activity a1
  let x = makeClock in (                  // {x:clock(alpha)},{alpha},emptyset
    finish                                // {x:clock(alpha)},emptyset,emptyset
      async x (        //activity a2      // error: clock x is not in scope of finish
        resume x; 
        next; 
        drop x
      ); 
    resume x;
    next;
    drop x)
\end{lstlisting}
The cause for deadlock is that activity $a_2$ is registered with a clock
that is defined outside the enclosing \finishk\!: clock $x$ is defined
in line~2, whereas the\finishk expression extends from line~3 to line~8.
Our type system rejects this program, because when typing a%
$\finish e$ expression we type check $e$ in an environment 
with no registered clocks (\textit{vide} rule~\Rfinish). 
%

\paragraph{Example 7: Clocked finish/clocked async}

Version 2.10 of the X10 language introduces keyword \lstinline{clocked} to
prefix async and finish.
\begin{quote}
  In the most common case of a single clock coordinating a few
  behaviors, X10 allows coding with an implicit clock.
  [...]
  A \lstinline{clocked finish} introduces a new clock.
  It executes its body in the usual way that a finish does---except
  that, when its body completes, the activity executing the clocked
  finish drops the clock, while it waits for asynchronous spawned
  asyncs to terminate.
  A \lstinline{clocked async} registers its async with the implicit clock
  of the surrounding clocked \lstinline{finish}.
  [...]
  Clocked finishes may be nested. The inner
  \lstinline{clocked finish} operates in a single phase of the outer one.
\end{quote}

This featured introduced in the language is purely ``syntactic
sugar,'' which makes the common practice of creating a single clock and
sharing it among activities simpler.
The following two code listings present a program with and without the
syntactic extension side-by-side.
On the left column we show an activity written on a language that
resembles X10~\cite{saraswat:x10-report}. Remember that, in X10,
expression~$\nextk$ implicitly issues a~$\resumek$ and also that
activities implicitly drop all clocks on exit.
On the right column we find an equivalent activity written in our language.
Activity $a_1$ spawns three activities $a_2$, $a_3$, and $a_4$.
Activities~$a_2$ and $a_3$ share the same clock, say~$c_1$,
whereas activity~$a_4$ holds a different clock, say~$c_2$, but
it does not hold~$c_1$.
In the first phase of clock~$c_1$ the system executes
concurrently~$e_1$, $e_3$, and activity~$a_4$.
In the second phase of clock~$c_1$, activity~$a_4$ has
terminated, and expressions~$e_2$ and $e_4$ execute concurrently.

\begin{tabular}{l|r}
\!\!\!\!\!
\begin{lstlisting}[boxpos=t]
//activity a1
clocked finish (

 clocked async ( //activity a2
  e1
  next;
  e2

 );
 clocked async ( //activity a3
  e3
  next;
  e4

 )
 clocked finish (


  clocked async ( //activity a4
   e5
   next;
   e6

  )

 )

)
\end{lstlisting}
\!\!
&
\begin{lstlisting}[boxpos=t]
//activity a1
finish                          //emptyset,emptyset,emptyset
 let c1 = makeClock in (        //{c1:clock(alpha1)},{alpha1},emptyset
  async c1 ( //activity a2      //{c1:clock(alpha1)},{alpha1},emptyset
   e1
   resume c1; next;             //{c1:clock(alpha1)},{alpha1},emptyset
   e2         
   drop c1;                     //{c1:clock(alpha1)},emptyset,emptyset
  );                            //{c1:clock(alpha1)},{alpha1},emptyset
  async c1 ( //activity a3      //{c1:clock(alpha1)},{alpha1},emptyset
   e3
   resume c1; next;             //{c1:clock(alpha1)},{alpha1},emptyset
   e4
   drop c1;                     //{c1:clock(alpha1)},emptyset,emptyset
  );                            //{c1:clock(alpha1)},{alpha1},emptyset
  finish                        //{c1:clock(alpha1)},emptyset,emptyset
   let c2 = makeClock in (      //{c1:clock(alpha1),
                                // c2:clock(alpha2)},{alpha2},emptyset
    async c2 (//activity a4     //{c1:clock(alpha1),
     e5                         // c2:clock(alpha2)},{alpha2},emptyset
     resume c2; next;           //{c1:clock(alpha1),
     e6                         // c2:clock(alpha2)},{alpha2},emptyset
     drop c2           //{c1:clock(alpha1),c2:clock(alpha2)},emptyset,emptyset
    );             //{c1:clock(alpha1),c2:clock(alpha2)},{alpha2}, emptyset
    drop c2            //{c1:clock(alpha1),c2:clock(alpha2)},emptyset,emptyset
   );                           //{c1:clock(alpha1)},{alpha1},emptyset
  drop c1                       //{c1:clock(alpha1)},emptyset,emptyset
 )                              //emptyset,emptyset,emptyset
\end{lstlisting}
\end{tabular}
~\\
The activity (on the right) obtained by expanding
\lstinline[morekeywords=clocked]{clocked finish} and
\lstinline[morekeywords=clocked]{clocked async} is still typable with
the type system we propose.


\section{Main results}
\label{sec:results}

This section is dedicated to the study of the main result
of our system, namely typing preservation and type safety for typable
programs.

We are only interested in well-formed states
(the rules in Figure~\ref{fig:well-formed-states} check whether a
state is well-formed). 
A state is well formed if for each clock, the set of registered
activities with the clock contains exactly those activities that can
manipulate it. 
State $$\{c: \heapClock \_ \emptyset \_\}; \{l \colon \activity
{\{c\colon \_\}} {\lets \_ \nextk \_} \_\}$$ is ill formed,
since activity $l$ uses clock $c$ and is not registered with $c$. The 
activity is able to advance $c$'s phase without becoming quiescent on $c$.
State $$\{c: \heapClock \_ {\{l, l', \dots\}} \_\}; \{l \colon
\activity \emptyset \_ \_, l' \colon \activity {\{c\colon \_\}} \_ \_,
\dots\}$$ is also ill formed, since activity $l$ is mentioned as
registered with clock $c$ and is not part of $l$'s local view (which
is $\emptyset$). Any other activity registered with $c$ ($l'$ in the
example) is bound to deadlock because $l$ will never quiesce on $c$.

\begin{figure}
  \begin{gather*}
    \tag{\WFactClock, \WFactClockE}
    \frac{
      c \vdash A \colon S_1
      \qquad
      c \vdash A' \colon S_2
    }{
      c \vdash A, \{ l \colon \activity V \_ {A'} \} \colon S_1 \cup
      S_2 \cup \dom V \cap \{c\}
    }
    \qquad
    \qquad
    c \vdash \emptyset \colon \emptyset
    \\
    \tag{\WFheap, \WFheapE}
    \frac{
      Q \subseteq S \subseteq \dom \Gamma 
      \qquad
      c \vdash A \colon S
      \qquad
      \Gamma; A \vdash H \colon \diamond
   }{
     \Gamma; A \vdash 
      H, \{ c \colon \heapClock \_ S Q \} \colon \diamond
    }
    \qquad
    \qquad
    \Gamma; A \vdash \emptyset \colon \diamond
    \\
    \tag{\WFactSet, \WFactSetE}
    \frac{
      H(c_i) = \heapClock \_ {\mathcal R_i} \_
      \qquad
      l \in \mathcal R_i
      \qquad
      H \vdash A \colon \diamond
      \qquad
      H \vdash A' \colon \diamond
    }{
      H \vdash 
      A, \{ l \colon {\activity {\{c_1\colon \_, \dots, c_n \colon \_\}} \_ {A'}}\} \colon \diamond
    }
    \qquad
    \qquad
    H \vdash \emptyset \colon \diamond
    \\
    \tag{\WFstate}
    \frac{
      H \vdash A \colon \diamond
      \qquad
      \Gamma; A \vdash H \colon \diamond
    }{
      \Gamma \vdash H; A \colon \diamond
    }
 \end{gather*}
  \caption{Well-formed states}
  \label{fig:well-formed-states}
\end{figure}


In order to state our results we must be able to type check run-time
expressions as well as machine states.
The typing rules for run-time expression $\join l$ and for machine
states and activities are depicted in
Figures~\ref{fig:typing-run-type-expressions} and
\ref{fig:typing-state-act}.
The type of a $\join l$ expression (rule~\Tjoin) is that of activity
$l$. Notice that $\join l$ is the result of evaluating a $\finish e$
expression~(rule~\Rfinish) that is, in fact, the type of
$e$~(rule~\Tfinish).
It is worth noticing that a heap $H$ is well typed if 
each clock is assigned to a different singleton type and if the clocks allocated
in the heap are exactly those of typing~$\Gamma$ (rule~\Theap{}
where~$\textbf C $ is the set of all clocks). 
Moreover, activities may only resume on registered clocks.
An activity $\activity V e A$ has the type of its expression $e$
(rule~\Tact), which must unregister from all its clocks before
terminating, since after evaluating $e$ it is expected that the set of
registered clocks should be empty.
Rule~\Tstate{} incorporates the definition of well-formed states into 
the type system.
The remaining typing rules should be easy to follow.

\begin{figure}[t]
 \begin{gather*}
  \tag{\Tjoin}
  \Gamma,l\colon\tau; \mathcal R; \mathcal Q
  \vdash \join l  \colon (\tau, \mathcal R, \mathcal Q)
\end{gather*}
 \caption{Typing rules for run-time expressions}
 \label{fig:typing-run-type-expressions}
\end{figure}


\begin{figure}[t]
  \begin{gather*}
    \tag{\TcheckView}
    \frac{
        \Gamma; \mathcal R  \vdash c_1 \dots c_n \colon \mathcal R
    }{
        \Gamma  \vdash \{c_1 \colon \_, \dots, c_n \colon \_\} \colon \mathcal R
    }
    \\
    \tag{\Tact}
    \frac{
      \Gamma \vdash V \colon \mathcal{R}
      \qquad
      \mathcal{Q} \subseteq \mathcal{R}
      \qquad
      \Gamma; \mathcal{R} ; \mathcal{Q} \vdash e \colon (\tau,
      \emptyset, \emptyset)
      \qquad
      \Gamma \vdash A
    }{
      \Gamma \vdash \activity V e A \colon \tau
    }
    \\
    \tag{\TactSet}
    \frac{
      \Gamma \vdash a_1 \colon \tau_1 \quad \cdots \quad 
      \Gamma \vdash a_n \colon \tau_n
    }{
      \Gamma, l_1 \colon \tau_1, \dots, l_n \colon \tau_n
      \vdash \{l_1\colon a_1, \dots, l_n \colon a_n \} 
    }
    \\
   \tag{\Theap}
    \frac{
      \Gamma; \mathcal R \vdash c_1 \dots c_n \colon \mathcal R
      \qquad
      \{c_1, \dots, c_n\} = \dom \Gamma \restrictedto{\textbf C}
    }{
      \Gamma \vdash \{c_1 \colon h_1, \dots, c_n \colon h_n\}
    }
    \\
    \tag{\Tstate}
    \frac{
      \Gamma \vdash H;A \colon \diamond
      \qquad
      \Gamma \vdash H
      \qquad
      \Gamma \vdash A
    }{
      \Gamma \vdash \state H A
    }
  \end{gather*}
  \caption{Typing rules for machine states}
  \label{fig:typing-state-act}
\end{figure}


\begin{lem}[Weakening]
  \label{lem:weakening}
  Let \emph{a} be a variable or a clock name.
  \begin{enumerate}
  \item If $H \vdash A \colon \diamond $ then $H,c\colon h \vdash A
    \colon \diamond$.
  \item If $\Gamma; \mathcal R \vdash v \colon \tau$ then $\Gamma, a
    \colon\tau' \vdash v \colon \tau$.
  \item If $\Gamma \vdash V\colon\mathcal R$ then $\Gamma,a\colon\tau
    \vdash V\colon\mathcal R$.
  \item If $\Gamma; \mathcal R; \mathcal Q \vdash e\colon T$ then
    $\Gamma,a\colon\tau; \mathcal R; \mathcal Q \vdash e\colon T$.
 \end{enumerate}
\end{lem}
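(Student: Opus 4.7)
The plan is to prove each of the four clauses by induction on the appropriate derivation, treating the parts as a mutually-supporting chain: part~2 stands on its own, part~3 invokes part~2 on each value typed by \TclockSeq{}, and part~4 invokes parts~2 and~3 whenever a typing rule has value premises or refers to local views. Part~1 is essentially independent of the others and can be dispatched on its own. Throughout, I would make explicit the usual freshness side condition implicit in the ``$\Gamma, a\colon\tau'$'' and ``$H, c\colon h$'' notations, namely $a \notin \dom\Gamma$ and $c \notin \dom H$.

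For part~1, I would proceed by induction on the derivation of $H \vdash A \colon \diamond$, which by \WFactSet{} and \WFactSetE{} is really an induction on the size of $A$. The empty case is immediate by \WFactSetE{}. In the step case, the premises $H(c_i) = \heapClock{\_}{\mathcal R_i}{\_}$ with $l \in \mathcal R_i$ survive extension of the heap by a fresh binding $c\colon h$, since lookups of the $c_i$ are unaffected, and the inductive hypothesis handles the sub-activity set $A'$ and the smaller remainder of $A$. Reassembling via \WFactSet{} closes the case.

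For part~2, I would do induction on $\Gamma; \mathcal R \vdash v \colon \tau$. The three cases are \Tvar{}, \TclockR{} and \Tunit{}; all are immediate because the extended environment $\Gamma, a\colon\tau'$ still contains the binding (if any) looked up to type $v$, with a fresh entry $a$ that plays no role. Part~3 then follows by reading off \TclockSeq{}: I apply part~2 to each $v_i$ to lift the premises into the extended context, keeping the side conditions on the $\alpha_i$, and conclude using \TclockSeq{}. Part~4 proceeds by induction on the derivation of $\Gamma; \mathcal R; \mathcal Q \vdash e \colon T$; the value-based rules (\Tvalue{}, \Tresume{}, \Tdrop{}, and the first premise of \Tasync{}) use part~2 on their value premises, the structural rules (\Tmake{}, \Tnext{}, \Tfinish{}, the second premise of \Tasync{}) apply the inductive hypothesis directly, and \Tlet{} applies the inductive hypothesis to both $e_1$ and $e_2$, where the inner application occurs in the context $\Gamma, x\colon\tau, a\colon\tau'$ — an alpha-renaming of $x$ makes $x \neq a$ and the standard exchange of entries does the rest.

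The main obstacle, such as it is, is purely notational: keeping the implicit freshness assumptions on $a$ and $c$ in view so that the extended environments $\Gamma, a\colon\tau'$ and $H, c\colon h$ are well defined, and handling the binder case \Tlet{} via alpha-conversion. No deeper subtlety arises, since weakening in this system is entirely structural — adding an unused entry to $\Gamma$ or $H$ cannot falsify any of the lookups or side conditions required by the premises of the typing and well-formedness rules.
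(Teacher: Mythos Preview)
Your proposal is correct and follows essentially the same approach as the paper's proof outline: induction on the relevant derivations, with part~2 feeding into part~3 (via \TcheckView{} and \TclockSeq{}) and into the value-premise cases of part~4. The only case you omit is \Tjoin{}, which the paper simply lists as direct; otherwise your case analysis and use of the induction hypothesis match the paper's.
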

\begin{proof}[Proof outline] 
  \begin{enumerate}
  \item By induction on the derivation of the typing
    rules. Case~$\WFactSetE$ is direct. For case~$\WFactSet$ we use
    the induction hypothesis to prove that $H,\update{c}{h} \vdash A
    \colon \diamond$ and $H,\update{c}{h} \vdash A' \colon \diamond$;
    the remaining conditions are given by the hypotheses.
  \item By inspecting the typing rules.
  \item We apply rule~$\TcheckView$ to typify the clocks of the view,
    then we prove $\TclockSeq$ with (2).
  \item By induction on the derivation of the typing relation.
    Cases~$\Tmake$, $\Tnext$, and $\Tjoin$ are direct.
    Case~$\Tresume$ and $\Tdrop$ are proved similarly, using (2) to typify
    clock~$v$.  The proof for cases~$\Tfinish$, $\Tasync$, and~$\Tlet$ 
    follow by induction hypothesis. For case~$\Tasync$ we also use rule~$\TclockSeq$
    and~(2) to typify the clocks of the arguments.
  \end{enumerate}

\end{proof}

Notice we do not allow heap weakening for it would introduce in the
type environment clocks not present in the state.


\begin{lem}[Substitution]
  \label{lem:substitution}
  If $\Gamma;\mathcal {R} \vdash v\colon \tau$ and $\Gamma,x\colon\tau;
  \mathcal R; \mathcal Q \vdash e \colon T$ 
  then $\Gamma; \mathcal R; \mathcal Q \vdash e\subst vx
  \colon T$.
\end{lem}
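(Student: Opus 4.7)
The plan is to proceed by induction on the derivation of $\Gamma,x\colon\tau;\mathcal R;\mathcal Q \vdash e\colon T$, supported by a companion value-substitution property: if $\Gamma;\mathcal R\vdash v\colon\tau$ and $\Gamma,x\colon\tau;\mathcal R'\vdash v'\colon\tau'$, then $\Gamma;\mathcal R'\vdash v'\subst vx\colon\tau'$. That auxiliary property is established by case analysis on $v'$: the cases $v'=y$ with $y\neq x$ and $v'=\unitV$ are immediate, and when $v'=x$ we have $\tau=\tau'$ and $\mathcal R'\vdash\tau$, so either $\tau=\unitk$ (use \Tunit) or $\tau=\clockT{\alpha}$ with $\alpha\in\mathcal R'$ (use \Tvar), in both situations re-typing $v$ in $\mathcal R'$ directly.

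For the main induction, the easy cases are \Tmake, \Tnext, \Tunit, and \Tjoin (when $l\neq x$), where $x$ does not occur at all and the premises carry over unchanged. The cases \Tresume and \Tdrop reduce to the value-substitution property applied to the clock argument, with the effect on $\mathcal R$ and $\mathcal Q$ preserved since the singleton type~$\alpha$ is unaffected. The \Tlet case applies the induction hypothesis first to~$e_1$ producing the intermediate triple $(\tau_1,\mathcal R',\mathcal Q')$, then to~$e_2$ (which may be assumed $\alpha$-renamed so that $y\neq x$) under the extended environment, using Lemma~\ref{lem:weakening} to adjust~$\Gamma$ if needed. The \Tasync case combines the value-substitution property on the clock sequence~$\vec v$ with the induction hypothesis on the body~$e$, noting that the filter $\mathcal Q\cap\mathcal R'$ is stable under substitution.

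The delicate case is \Tfinish, where the body~$e'$ is typed under the reset context $\emptyset;\emptyset$. To apply the induction hypothesis we would need $\Gamma;\emptyset\vdash v\colon\tau$. When $\tau=\unitk$, the value $v$ re-types under any~$\mathcal R$ (in particular $\emptyset$), and the case is immediate. When $\tau=\clockT{\alpha}$, however, $v$ cannot be re-typed under $\emptyset$ since $\alpha\notin\emptyset$. The key observation here, and the main obstacle, is that in this situation $x$ simply cannot occur free in~$e'$: a routine inner induction on the derivation of $\Gamma,x\colon\clockT{\alpha};\mathcal R_0;\mathcal Q_0\vdash e'\colon T'$ shows that whenever $\alpha\notin\mathcal R_0$, the singleton~$\alpha$ never enters the running $\mathcal R$ (the only rule that extends~$\mathcal R$ is \Tmake, which picks a fresh~$\alpha'$), so \Twfc{} blocks every candidate use of~$x$ via \Tvar. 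Consequently $e'\subst vx=e'$, and the typing $\Gamma;\emptyset;\emptyset\vdash e'\colon T$ obtained by dropping $x\colon\clockT{\alpha}$ from the context (a straightforward strengthening) completes the case via~\Tfinish.

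Finally, the run-time case \Tjoin with $l=x$ is handled exactly by the value-substitution property applied at the top, since $\join l$ takes its type from the activity name. With these ingredients combined, the induction closes and the substituted expression receives the same type~$T$ under $\Gamma$, as required.
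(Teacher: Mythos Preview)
Your proof follows the same route as the paper's outline: induction on the typing derivation of $e$, supported by a separate substitution property for values. Your treatment of \Tfinish{} is in fact more careful than the paper's, which merely says the case is ``proved similarly to \Tasync, but simpler, since \Tfinish{} has no arguments and its set of clocks is empty''; your observation that a clock-typed $x$ cannot occur in the body when $\alpha\notin\mathcal R_0$ (because \Tmake{} only introduces fresh singletons and \Twfc{} blocks every use of $x$) is exactly what makes the case go through.

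One point worth flagging: the same subtlety you isolate for \Tfinish{} also arises in \Tasync. The body there is typed under $\mathcal R'$ (the singleton types of the transmitted clocks $\vec v$), not under the ambient $\mathcal R$, so when $\tau=\clockT{\alpha}$ with $\alpha\notin\mathcal R'$ you cannot directly invoke the induction hypothesis on the body---that would require $\Gamma;\mathcal R'\vdash v\colon\tau$. Your inner-induction argument from the \Tfinish{} case applies verbatim here: $x$ cannot occur in the body, the substitution is vacuous, and a strengthening step finishes the case. When $\alpha\in\mathcal R'$ (or $\tau=\unitk$), your value-level lemma re-types $v$ under $\mathcal R'$ and the induction hypothesis applies as you describe. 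The paper's outline glosses over this distinction for both rules, so your proposal is, if anything, more complete once you make the \Tasync{} split explicit.
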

\begin{proof}[Proof outline] 
  For~$\Tvalue$ we analyse two cases: when the value is the variable
  being substituted, and when it is not replaced. For the
  former case, we apply Lemma~\ref{lem:weakening} on the first
  hypothesis. For the latter case we use rule~$\Tvar$.
  Rules~$\Tmake$, $\Tnext$, and $\Tjoin$ are direct.   
  Cases~$\Tdrop$ and~$\Tresume$ follow by induction hypothesis.
  Rule~$\Tasync$ is the most complex. For the clocks being shared we
  use Lemma~\ref{lem:weakening} and the second hypothesis. For the
  expression being spawned we apply the induction hypothesis.
  Rule~$\Tfinish$ is proved similarly to~$\Tasync$, but simpler, since 
  \Tfinish{} has no arguments and its set of clocks is empty.
\end{proof}

\begin{lem}[Preservation for activities]
  \label{lem:preservation-activities}
  If $\Gamma \vdash H$
  and $\Gamma \vdash V\colon \mathcal Q$
  and $\mathcal Q \subseteq \mathcal R$
  and $\Gamma; \mathcal R; \mathcal Q \vdash e\colon T$
  and $\Gamma \vdash A$
  and $l\in\dom H$
  and $\state H {\activity VeA} \rightarrow_l H'; A''; \activity {V'}{e'}{A'}$,
  then
 $\Gamma' \vdash H'$
  and $\Gamma' \vdash V'\colon \mathcal Q'$
  and $\mathcal Q' \subseteq \mathcal R'$
  and $\Gamma'; \mathcal R'; \mathcal Q' \vdash e'\colon T$
  and $\Gamma' \vdash A',A''$,
  for some $\Gamma' \supseteq \Gamma$.
\end{lem}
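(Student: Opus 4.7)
The plan is to proceed by case analysis on the reduction rule applied in $\state H {\activity VeA} \rightarrow_l H'; A''; \activity{V'}{e'}{A'}$, with one case per activity reduction rule in Figure~\ref{fig:reduction-rules-activity}: $\Rasync$, $\Rmake$, $\Rresume$, $\Rnext$, $\Rdrop$, $\Rfinish$, and $\Rjoin$. In each case I would invert the typing derivation of $e$ (which is forced by the top-level constructor of $e$), exhibit the witness $\Gamma'$, and then verify the five conclusions in turn, appealing to the Weakening lemma (Lemma~\ref{lem:weakening}) whenever the type environment grows.

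First I would dispatch the cases that do not require extending $\Gamma$. For $\Rresume$, $\Rnext$, and $\Rdrop$ take $\Gamma' = \Gamma$; inversion of $\Tresume$, $\Tnext$, and $\Tdrop$ respectively yields types of the form $(\unitk, \mathcal R, \mathcal Q \cup \{\alpha\})$, $(\unitk, \mathcal R, \emptyset)$, and $(\unitk, \mathcal R \setminus \{\alpha\}, \mathcal Q \setminus \{\alpha\})$, which coincide exactly with the heap and view updates performed by the operational rule (adding $l$ to $Q$, clearing $Q$ and bumping the phase, and removing $c$ from $V$ respectively). For $\Rjoin$ take $\Gamma' = \Gamma$: the residual value $v_0$ already has the required type by the typing of the sub-activity $l_0$. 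For $\Rfinish$ take $\Gamma' = \Gamma, l_0 \colon \tau$ where $\tau$ is extracted from $\Tfinish$; the new sub-activity typechecks by $\Tact$ (with empty view and sub-activities) and the residual $\join{l_0}$ typechecks by $\Tjoin$.

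The interesting cases are $\Rmake$ and $\Rasync$. For $\Rmake$ I extend $\Gamma' = \Gamma, c \colon \clockT\alpha$ with the fresh clock name $c$ and the fresh singleton $\alpha$ supplied by $\Tmake$; the heap entry $\heapClock 0 {\{l\}} \emptyset$ is well-typed because $l \in \dom V\update c 0$, and $V \update c 0$ types to $\mathcal R \cup \{\alpha\}$ via $\TcheckView$ and $\TclockSeq$, using weakening to retype the preexisting clocks under $\Gamma'$. For $\Rasync$ I extend $\Gamma' = \Gamma, l' \colon \unitk$; inversion of $\Tasync$ gives that the spawned body $e$ typechecks with registered set $\mathcal R'$ (the singletons of $\vec c$) and quiescent set $\mathcal Q \cap \mathcal R'$, which is precisely what $\Tact$ demands for the new sub-activity and matches the operational condition $Q' = \text{if } l \in Q \text{ then } Q \cup \{l'\} \text{ else } Q$ on the quiescence inheritance; the heap update only adds $l'$ to the $R$ and $Q$ components of each $c \in \vec c$, which leaves $\dom H$ unchanged and so preserves $\Theap$, while the $\WFheap$ conditions are re-established because the updated views of $l$ and $l'$ together account for the added registrant.

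The main obstacle I anticipate is twofold. First, keeping the $\WFstate$ well-formedness invariants in lockstep with the typing: every operational rule simultaneously modifies the view $V$ and the heap $H$, and rules $\WFactClock$/$\WFheap$ tie them together via the equality between each clock's $R$ component and the set of activities whose view contains the clock, so the bookkeeping must be checked explicitly. Second, and more delicate, is the $\Rdrop$ sub-case in which $l$ is the sole registrant and the clock $c$ is garbage-collected from $H$: the statement asks for $\Gamma' \supseteq \Gamma$, but $\Theap$ requires $\dom\Gamma'\restrictedto{\textbf C}$ to match the clocks actually present in $H'$, so $c$ would need to be removed from $\Gamma$. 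I would handle this by observing that, at the point of the drop, the singleton $\alpha$ occurs in no surviving view or sub-activity (since the well-formedness of the premise state forced $R = \{l\}$), so a mild strengthening lets us discard $c \colon \clockT\alpha$ from $\Gamma$; the cleanest presentation is to read the conclusion ``$\Gamma' \supseteq \Gamma$'' modulo the freshly dropped clock, or equivalently to let $\Gamma'$ extend $\Gamma$ restricted to live names, with the remaining verifications then being routine.
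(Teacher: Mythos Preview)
Your approach---case analysis on the activity reduction rules, inverting the corresponding typing rule in each case and exhibiting the witness $\Gamma'$---is exactly what the paper does; the paper's own proof outline says only that it is ``a routine inspection of the rules'' and gives no further detail. Your observation about the garbage-collection subcase of $\Rdrop$ (where removing $c$ from $H$ clashes with the conjunction of $\Gamma' \supseteq \Gamma$ and the $\Theap$ requirement $\dom H' = \dom\Gamma'\restrictedto{\textbf C}$) is a genuine technical wrinkle that the paper's terse outline does not address.
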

\begin{proof}[Proof outline] 
  Despite the scary look of the statement, its proof is a routine
  inspection of the rules in the relation $\state H {\activity VeA}
  \rightarrow_l H'; A''; \activity {V'}{e'}{A'}$.
\end{proof}

\begin{lem}[Preservation for \xtenclocks]
  \label{thm:preservation}
  If $\Gamma \vdash S$ and $S \rightarrow S'$ then $\Gamma' \vdash
  S'$ and $\Gamma \subseteq \Gamma'$.
\end{lem}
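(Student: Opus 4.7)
The plan is to do induction on the derivation of $S \to S'$, which involves exactly three rules (\Rletval, \Rlet, \Ractivity), delegating the substantive work to Lemma~\ref{lem:substitution} (substitution) and Lemma~\ref{lem:preservation-activities} (preservation for activities). For \Rletval I would invert \Tstate, \TactSet, \Tact, and \Tlet to extract a typing of $v$ at some type $\tau$ and a typing of the continuation $e$ under an environment extended with $x\colon\tau$; substitution then produces a typing of $e\subst{v}{x}$, and the state typing reassembles in the same environment $\Gamma' = \Gamma$, since neither the heap nor any view or registration set changes.

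For \Ractivity the reduction occurs strictly inside a sub-tree of activities, so I would invoke the induction hypothesis on the sub-state $\state{H}{A'}$, obtain an extended $\Gamma' \supseteq \Gamma$ with $\Gamma' \vdash \state{H'}{A''}$, and then rewrap the result in the enclosing activity via \Tact and \TactSet, using Lemma~\ref{lem:weakening} to lift the typings of the sibling activities in $A$ and of the heap entries from $\Gamma$ to $\Gamma'$. The interesting case is \Rlet, where the outer step is driven by an activity-level step $\state{H}{\activity{V}{e}{A}} \to_l H'; A'''; \activity{V'}{e'}{A'}$. Inversion of \Tstate through \Tlet supplies exactly the hypotheses of Lemma~\ref{lem:preservation-activities}, which hands me an extended $\Gamma' \supseteq \Gamma$ together with new typings for $H'$, $V'$, $e'$, $A'$ and the spawned set $A'''$; I then rebuild the outer state by reapplying \Tlet to $\lets{x}{e'}{e''}$ (weakening the typing of the continuation $e''$ from $\Gamma$ to $\Gamma'$), packaging with \Tact, and combining via \TactSet with the weakened siblings in $A$ and the fresh activities in $A'''$.

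The main obstacle I expect is carrying the well-formedness judgement $\Gamma' \vdash H'; A'' \colon \diamond$ of Figure~\ref{fig:well-formed-states} through the step. The delicate subcases live inside Lemma~\ref{lem:preservation-activities}: \Rasync must add the fresh $l'$ to the $R$-component of each inherited clock and propagate the quiescent flag; \Rmake introduces a fresh clock with $l$ as its sole registrant; \Rdrop removes $l$ from $R$ and $Q$ and may deallocate the clock; and \Rnext advances the subset $C_1$ of clocks. For each of these I need to verify the two invariants that $Q \subseteq R$ and that a clock's registered set coincides with the activities whose local view mentions that clock, which in turn requires showing that the sibling activities in $A$ (untouched by the step) are unaffected by the extension of $\Gamma$ and by the registration updates, a routine consequence of weakening together with the freshness of the newly generated clock and activity names.
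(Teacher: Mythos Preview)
Your proposal is correct and follows essentially the same route as the paper: induction on the derivation of $S\rightarrow S'$, with \Rletval{} handled by the substitution lemma (taking $\Gamma'=\Gamma$), \Rlet{} handled by Lemma~\ref{lem:preservation-activities} together with weakening for the untouched siblings and the continuation, and \Ractivity{} handled by the induction hypothesis plus weakening. Your additional discussion of how the well-formedness judgement $\Gamma'\vdash H';A''\colon\diamond$ threads through the clock-manipulating cases is more explicit than the paper's outline, but it belongs to the proof of Lemma~\ref{lem:preservation-activities} rather than to this lemma itself, and the paper simply defers that to a ``routine inspection''.
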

\begin{proof}[Proof outline] 
  By induction on the derivation of the relation $S\rightarrow S'$. In
  all cases we build the derivation tree for~$\Gamma \vdash S$ using
  rules \Tstate, \TactSet, and \Tact, collect the hypotheses, use the
  above Lemmas, and then build a tree for~$\Gamma' \vdash S'$ using
  the same typing rules.
  The base cases are when the derivation ends with rules \Rletval{}
  and \Rlet. For \Rletval{} we take $\Gamma'=\Gamma$ and use the
  substitution Lemma~\ref{lem:substitution}. For \Rlet{} we use the
  (specially crafted) preservation for activities
  (Lemma~\ref{lem:preservation-activities}), as well as the weakening
  (Lemma~\ref{lem:weakening}) for the extant activity set $A$.
  The induction step is when derivation ends with rule \Ractivity; in
  this case we use the Weakening Lemma.
\end{proof}



\begin{thm}[Type Safety]\label{thm:ts}
  If $\Gamma \vdash S$ and $S\rightarrow^* S'$, then $S' \nerr$.
\end{thm}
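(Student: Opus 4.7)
The plan is to reduce type safety to the already-established Preservation Lemma (Lemma~\ref{thm:preservation}) together with an auxiliary \emph{exclusion} lemma: every well-typed state lies outside $\err$, \ie if $\Gamma \vdash S$ then $S \nerr$. A routine induction on the length of the reduction $S \rightarrow^* S'$, invoking preservation at each step, yields $\Gamma' \vdash S'$ for some $\Gamma' \supseteq \Gamma$; the exclusion lemma then finishes the argument.

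For the exclusion lemma I would proceed by contradiction. Assume $\Gamma \vdash S$ and $S \in \err$, and case on the rule of Figure~\ref{fig:run-time-errors} witnessing $S \in \err$. In each case, peeling the state-typing derivation down through \Tstate, \TactSet, \Tact, and (where relevant) \Tlet\ exposes a typing judgement for the offending redex whose premises contradict the side conditions of the error rule. Concretely: \Easync\ requires some clock outside $\dom H$ or $\dom V$, yet \TclockSeq\ combined with \Tvar\ forces every clock into $\dom \Gamma$, \Tact's premise $\Gamma \vdash V\colon \mathcal R$ forces them into $\dom V$, and \Theap\ identifies $\dom \Gamma \restrictedto{\textbf C}$ with $\dom H$; the ``not registered'' fragments of \Edrop, \Eresume, and \EnextII\ are dispatched identically, further invoking \TcheckView\ to type the view; \Eact\ is impossible because \Tact\ demands the type of the value to be $(\_, \emptyset, \emptyset)$, forcing $V = \emptyset$; and \EactSet\ follows immediately by induction on the sub-activity derivation.

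The main obstacle is the two error conditions whose side conditions refer to the \emph{global} quiesced set $Q$ stored in the heap rather than to local typing data: the ``$l \in Q$'' fragment of \Eresume\ and the side condition of \EnextI. The typing rules of Figure~\ref{fig:typing-state-act} record only the local view $V$, so the bridge to the heap's $Q$ sets must be supplied separately. The cleanest fix is to strengthen the state-typing judgement (or to attach a companion invariant preserved under reduction) asserting that, for each activity $l$, the typing-level set $\mathcal Q$ used when typing its expression coincides with the set of singleton types $\alpha$ such that $c_\alpha \in \dom V$ and $l \in Q_{c_\alpha}$ in the heap. Half of this correspondence is already provided by \WFheap's premise $Q \subseteq S$, and preservation of the invariant under each reduction rule is a direct inspection mirroring the proof of Lemma~\ref{lem:preservation-activities}. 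Once this invariant is in place, the ``$l \in Q$'' fragment of \Eresume\ contradicts the premise $\alpha \notin \mathcal Q$ of \Tresume, and \EnextI\ contradicts the premise $\mathcal R = \mathcal Q$ of \Tnext, completing the case analysis and hence the theorem.
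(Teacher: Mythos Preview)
Your approach matches the paper's: establish $\Gamma' \vdash S'$ via preservation, then derive a contradiction by induction on the $\err$ derivation, peeling through \Tstate, \TactSet, \Tact, and exploiting the one-to-one correspondence between clock names in $V$ and singleton types in $\mathcal R$ (via \TcheckView/\TclockSeq) together with $\dom H = \dom(\Gamma'\restrictedto{\textbf C})$ (via \Theap) to dispatch the ``$c\notin\dom V$'' and ``$c\notin\dom H$'' conditions of \Easync, \Eresume, \Edrop, and \EnextII.

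Where you go beyond the paper's outline is in flagging the ``$l\in Q$'' clause of \Eresume\ and the whole of \EnextI: the paper's proof sketch treats only the registration conditions and is silent on these two cases. Your observation that \Tact\ existentially quantifies $\mathcal Q$, and that nothing in Figures~\ref{fig:well-formed-states}--\ref{fig:typing-state-act} ties this $\mathcal Q$ to the heap's quiesced set, is accurate; the strengthening you propose---an invariant aligning the typing-level $\mathcal Q$ with $\{\alpha \mid l \in Q_{c_\alpha}\}$ for each activity $l$, shown to be preserved alongside Lemma~\ref{lem:preservation-activities}---is exactly the missing ingredient and is the standard remedy. In short, your proposal is the paper's argument made rigorous on the two cases the paper's outline elides.
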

\begin{proof}[Proof outline]
  We first establish that $\Gamma' \vdash S'$ using preservation
  (Lemma~\ref{thm:preservation}). Then we proceed by contradiction.
  The contradiction is proved by induction on the definition of
  $\mathrm{Error}$ predicate. For the base cases of \resumek, \dropk,
  and \asynck{} we build the derivation trees for the errors in
  Figure~\ref{fig:run-time-errors}, to conclude that $\Gamma' \vdash V
  \colon \mathcal R$ and $\Gamma';\mathcal R \vdash c \colon
  \clockk(\alpha)$. Sequent $\Gamma' \vdash V \colon \mathcal R$ is
  derived from rule \TcheckView, which effectively establishes a
  \emph{one-to-one} correspondence between the clock names $c$ in $V$
  and the singleton types $\alpha$ in~$\mathcal R$. On the other hand,
  sequent $\Gamma';\mathcal R \vdash c \colon \clockk(\alpha)$ is
  derived via rule \Twfc, which says that $\alpha\in\mathcal R$. Since
  $\alpha\in\mathcal R$, the correspondence allows us to conclude that
  $c\in \dom V$.
  Establishing that $v\in\dom H$ is easier. Given that $\Gamma' \vdash
  H$, we conclude that $\dom H = \dom(\Gamma' \restrictedto{\textbf
    C})$, and from sequent $\Gamma';\mathcal R \vdash c \colon
  \clockk(\alpha)$ we know that $c\in\dom\Gamma'$, hence done.
%
\end{proof}



In Section~\ref{sec:semantics} we introduced a loading function that
builds the initial machine state corresponding to a given
expression. Such a state is typable if the expression is.
Let $\load(e)$ be defined as the state $\state \emptyset {\update l
  {\activity \emptyset {\lets x e \unitk} \emptyset}}$. Then we have:

\begin{lem}
  If $\Gamma; \emptyset; \emptyset \vdash e\colon T$ then $\Gamma
  \vdash \load (e)$.
\end{lem}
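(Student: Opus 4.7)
The proof is a direct top-down construction of a typing derivation for $\state \emptyset {\update l {\activity \emptyset {\lets x e \unitk} \emptyset}}$. Before starting, observe that because the initial activity has empty local view, rule \Tact{} forces its body to have effect $(\_, \emptyset, \emptyset)$; so the statement implicitly assumes $T = (\tau, \emptyset, \emptyset)$ for some $\tau$ (an expression whose final registered or quiescent set were non-empty could never be wrapped into a typable initial activity, regardless of the proof strategy). I also silently extend $\Gamma$ with the fresh binding $l \colon \unitk$ that \TactSet{} introduces at the leaf, since the let-body returns $\unitk$.

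Applying \Tstate{} at the root yields three subgoals: well-formedness $\Gamma \vdash \emptyset; A \colon \diamond$, heap typing $\Gamma \vdash \emptyset$, and activity-set typing $\Gamma \vdash A$. The first is handled by \WFstate: the heap side closes immediately via \WFheapE, and the activity side closes by applying \WFactSet{} in the degenerate case where the clock view $\{c_1 \colon \_, \dots, c_n \colon \_\}$ is empty (so the premises $H(c_i) = \heapClock{\_}{\mathcal R_i}{\_}$ and $l \in \mathcal R_i$ are vacuous), followed by \WFactSetE{} for the inner and residual activity sets. The heap typing uses \Theap{} with $n = 0$, whose only side condition is $\dom \Gamma \restrictedto{\textbf C} = \emptyset$; this holds because $e$ is typed under $\mathcal R = \emptyset$ and rule \Twfc{} would make any clock binding of $\Gamma$ unusable anyway, so we can safely restrict to a clock-free $\Gamma$.

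The only substantive subgoal, $\Gamma \vdash A$, reduces by \TactSet{} (single-activity case) to $\Gamma \vdash \activity \emptyset {\lets x e \unitk} \emptyset \colon \unitk$, which we unpack by \Tact. The view check $\Gamma \vdash \emptyset \colon \emptyset$ is \TcheckView{} on the empty clock sequence; the quiescence containment $\emptyset \subseteq \emptyset$ is trivial; the inner activity set closes via the empty base of \TactSet; and the body is typed by \Tlet, chaining the hypothesis $\Gamma; \emptyset; \emptyset \vdash e \colon (\tau, \emptyset, \emptyset)$ with a \Tvalue/\Tunit{} typing of the continuation $\unitk$, which preserves the empty register and quiesce sets. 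No step is hard; the whole argument is a routine assembly of the typing derivation, and the only point needing comment is the implicit shape of $T$ noted above, which is a consistency requirement on any well-posed top-level program rather than a genuine obstacle.
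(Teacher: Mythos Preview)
The paper states this lemma without proof, so there is no argument to compare against; your direct assembly of the derivation via \Tstate, \WFstate/\WFheapE/\WFactSet, \Theap, \TactSet, \Tact, and \Tlet{} is exactly the routine check one would supply. The three side conditions you flag---that $T$ must have shape $(\tau,\emptyset,\emptyset)$, that the conclusion really lives in $\Gamma, l\colon\unitk$ rather than $\Gamma$ itself, and that $\Gamma$ must contain no clock names for \Theap{} to accept the empty heap---are imprecisions in the lemma as stated rather than gaps in your argument, and you identify and handle them appropriately.
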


Our final result guarantees that a well-typed expression does
not reduce to an error.
\begin{cor}
  If $\Gamma; \emptyset; \emptyset \vdash e\colon T$ then $\load
  (e)$ does not reduce to an Error.
\end{cor}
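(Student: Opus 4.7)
The plan is to observe that this corollary is essentially the composition of the preceding lemma with the Type Safety theorem, so the proof requires no new technical machinery; it amounts to wiring the two results together.

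First, I would invoke the preceding lemma on the hypothesis $\Gamma; \emptyset; \emptyset \vdash e\colon T$ to obtain $\Gamma \vdash \load(e)$. This gives a typed initial state in exactly the form required by Theorem~\ref{thm:ts}. Then, for any $S'$ with $\load(e) \rightarrow^{*} S'$, I would apply Theorem~\ref{thm:ts} with $S \eqdef \load(e)$, yielding $S' \nerr$ directly. Since this holds for every $S'$ reachable from $\load(e)$, we conclude that $\load(e)$ cannot reduce to an Error.

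There is no real obstacle here; the corollary is a one-line packaging. The only subtlety worth mentioning is that Theorem~\ref{thm:ts} is stated for reflexive-transitive reduction $\rightarrow^{*}$, which already covers the degenerate case $S' = \load(e)$, so we do not need a separate argument that the initial state itself is not an error. Thus the entire proof reduces to a single \emph{by the previous lemma and Theorem~\ref{thm:ts}}.
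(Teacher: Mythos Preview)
Your proposal is correct and matches the paper's own proof exactly: the paper's proof is the single line ``From the lemma above and Theorem~\ref{thm:ts},'' which is precisely the composition you describe. Your additional remark about the reflexive case being covered by $\rightarrow^{*}$ is a reasonable clarification but not required.
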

\begin{proof}
  From the lemma above and Theorem~\ref{thm:ts}.
\end{proof}

We anticipate a \emph{progress property} for typable
processes. Typability ensures that processes do not get stuck when
dropping a clock that is not in its clock set anymore, or when
otherwise trying to access a clock that it not allocated in the
heap. The remaining case is \nextk{} where the activity waits for set
$C_1$ (the set of quiescent clocks the activity is registered with) to
grow until becoming (together with 
$C_2$---the set of clocks that have already advance their phase) the
clock set of the activity. And this is bound to happen for both
\nextk{} and \dropk{}, since in each activity both implicitly resume
all clocks.
%
We foresee as well that typability also rules out programs that
deadlock, since $\finishk$ expressions can only use 
clocks created in its body expressions.


\section{Discussion and future work}
\label{sec:further-work}


We study two synchronisation constructs of X10: a primitive finish
that waits for the termination of activities (lightweight threads),
and clocks (a generalisation of barriers).
To better understand the language we define an operational semantics
and a type system (alternative to the constraint-based
system~\cite{saraswat.jagadeesan:concurrent-clustered-programming})
for a subset of X10 called \xtenclocks.
Our main result is type safety for typable programs
(Theorem~\ref{thm:ts}).

Our semantics represents clocks in the heap as triples $\heapClock p R
Q$ relying on two sets for recording the registered activities~$R$
and the quiesced activities~$Q$ on a clock.
Implementing operations that work with sets is costly; for instance
rule~\Rnext{} needs to compute sets~$C_1$ and~$C_2$, by checking if
sets~$R$ and~$Q$ are equal, and then verify if $C_1 \cup C_2 = \dom
V$. Should we make a real life implementation of the proposed semantics,
set operations would have a significant impact on performance.
We sketch a much faster approach that chooses to represent clocks as
triples $\heapClock p r q$ describing the clock phase, as before, but
taking $r$ and $q$ as the cardinal numbers of sets~$R$ and~$Q$. With
this representation we lose information about the identity of the
activities registered with a clock and, in particular, we cannot
determine if an activity has already resumed in the current phase
(\textit{vide} rules~\Rasync{} and~\Rresume).
To overcome this problem we need to enrich the clock local view with
an indicator of whether an activity has performed a resume in the current
phase. Thus, a clock local view becomes a pair $\langle p, b\rangle$
containing the current clock phase~$p$ (as before) and the resume
boolean indicator $b$, describing when the activity has resumed.
With this information it is straightforward to adapt rules \Rasync,
\Rmake, \Rresume, \Rnext, and \Rdrop. For instance, rule
\Rresume{} only updates the clock global view ($q \leftarrow q + 1$)
whenever its local view indicator is false. Also, rule \Rnext{} needs
to set $r$ to zero when advancing the clock global phase, and to clear
the indicator $b$ upon advancing the clock local phase.
Checking that all activities registered with a clock have quiesced
amounts to compare two integer values ($r = s$), instead of two sets
$R$ and $Q$ as before.
The main reasons for not adopting the semantics just sketched are that
the chosen semantics needs fewer rules and is easier to read and
understand.



We intend to investigate imperative features of the language, 
specially those related with clocks, and also other language constructs.
The finish construct is not only used to wait for the termination of
sub-activities, but also, as the language reference reads, ``A
collection point for uncaught exceptions generated during the
execution of S [the body of a finish]''~\cite[page
196]{saraswat:x10-report}.
Enriching our model with exceptions seems like a natural, promising
follow-up of our work.
The language report also reads, ``X10 does not contain a register
statement that would allow an activity to discover a clock in a data
structure and register itself on it'' (page 208); we would like to
study type-safe extensions to the language that might alleviate this
restriction in controlled situations.
Furthermore, we expect to extend our results to \xtenclocks equipped
with recursion or some form of iteration.
Futures are a form of a function that evaluates asynchronously, like
an activity, but can be \emph{forced} to finish locally to return a
value.
The semantics of a future, in what regards termination, is like
the~$\finishk$ construct, but its use cases are different.
We would also like to allow futures to register themselves with
clocks, a feature missing in X10.

\emph{Phasers} are a coordination construct that unifies collective
and point-to-point synchronisations with performance results
competitive to existing barrier
implementations~\cite{shirako.peixotto.etal:phasers}.
Phasers can be seen as an extension over clocks that allow for more
fine-grained control over synchronisation modes.
\emph{Phaser accumulators} are reduction constructs for dynamic
parallelism that integrate with
phasers~\cite{shirako.peixotto.etal:phaser-accumulators}.
Although further investigation is needed, we believe our work can be
extended to accommodate phasers and phaser accumulators, specially
with regards to the operational similarities between clocks and
phasers.



\section*{Acknowledgements}

The authors would like thank the anonymous referees for constructive
criticisms and detailed comments.

\bibliography{x10}

\begin{thebibliography}{10}
\providecommand{\bibitemdeclare}[2]{}
\providecommand{\urlprefix}{Available at }
\providecommand{\url}[1]{\texttt{#1}}
\providecommand{\href}[2]{\texttt{#2}}
\providecommand{\urlalt}[2]{\href{#1}{#2}}
\providecommand{\doi}[1]{doi:\urlalt{http://dx.doi.org/#1}{#1}}
\providecommand{\bibinfo}[2]{#2}

\bibitemdeclare{inproceedings}{agarwal:mhp-x10}
\bibitem{agarwal:mhp-x10}
\bibinfo{author}{Shivali Agarwal}, \bibinfo{author}{Rajkishore Barik},
  \bibinfo{author}{Vivek Sarkar} \& \bibinfo{author}{Rudrapatna~K.
  Shyamasundar} (\bibinfo{year}{2007}):
  \emph{\bibinfo{title}{May-happen-in-parallel analysis of X10 programs}}.
\newblock In: {\sl \bibinfo{booktitle}{Proceedings of PPoPP'07}},
  \bibinfo{publisher}{ACM}, pp. \bibinfo{pages}{183--193},
  \doi{10.1145/1229428.1229471}.

\bibitemdeclare{inproceedings}{charles.etal:x10}
\bibitem{charles.etal:x10}
\bibinfo{author}{Philippe Charles}, \bibinfo{author}{Christian Grothoff},
  \bibinfo{author}{Vijay Saraswat}, \bibinfo{author}{Christopher Donawa},
  \bibinfo{author}{Allan Kielstra}, \bibinfo{author}{Kemal Ebcioglu},
  \bibinfo{author}{Christoph von Praun} \& \bibinfo{author}{Vivek Sarkar}
  (\bibinfo{year}{2005}): \emph{\bibinfo{title}{{X10}: an object-oriented
  approach to non-uniform cluster computing}}.
\newblock In: {\sl \bibinfo{booktitle}{Proceedings of OOPSLA'05}},
  \bibinfo{publisher}{ACM}, pp. \bibinfo{pages}{519--538},
  \doi{10.1145/1094811.1094852}.

\bibitemdeclare{article}{darema:spmd-model}
\bibitem{darema:spmd-model}
\bibinfo{author}{Frederica Darema}, \bibinfo{author}{David~A. George},
  \bibinfo{author}{V.~Alan Norton} \& \bibinfo{author}{Gregory~F. Pfister}
  (\bibinfo{year}{1988}): \emph{\bibinfo{title}{{A Single-Program-Multiple-Data
  computational model for EPEX/FORTRAN}}}.
\newblock {\sl \bibinfo{journal}{Parallel Computing}}
  \bibinfo{volume}{7}(\bibinfo{number}{1}), pp. \bibinfo{pages}{11--24},
  \doi{10.1016/0167-8191(88)90094-4}.

\bibitemdeclare{inproceedings}{frigo:cilk5}
\bibitem{frigo:cilk5}
\bibinfo{author}{Matteo Frigo}, \bibinfo{author}{Charles~E. Leiserson} \&
  \bibinfo{author}{Keith~H. Randall} (\bibinfo{year}{1998}):
  \emph{\bibinfo{title}{The implementation of the Cilk-5 multithreaded
  language}}.
\newblock In: {\sl \bibinfo{booktitle}{Proceedings of PLDI'98}},
  \bibinfo{publisher}{ACM}, pp. \bibinfo{pages}{212--223},
  \doi{10.1145/277650.277725}.

\bibitemdeclare{article}{gupta:fuzzy-barrier}
\bibitem{gupta:fuzzy-barrier}
\bibinfo{author}{Rajiv Gupta} (\bibinfo{year}{1989}): \emph{\bibinfo{title}{The
  fuzzy barrier: a mechanism for high speed synchronization of processors}}.
\newblock {\sl \bibinfo{journal}{SIGARCH Computer Architecture News}}
  \bibinfo{volume}{17}(\bibinfo{number}{2}), pp. \bibinfo{pages}{54--63},
  \doi{10.1145/68182.68187}.

\bibitemdeclare{inproceedings}{lea:java-fork-join}
\bibitem{lea:java-fork-join}
\bibinfo{author}{Doug Lea} (\bibinfo{year}{2000}): \emph{\bibinfo{title}{A Java
  fork/join framework}}.
\newblock In: {\sl \bibinfo{booktitle}{Proceedings of JAVA'00}},
  \bibinfo{publisher}{ACM}, pp. \bibinfo{pages}{36--43},
  \doi{10.1145/337449.337465}.

\bibitemdeclare{inproceedings}{lee.palsberg:fx10}
\bibitem{lee.palsberg:fx10}
\bibinfo{author}{Jonathan~K. Lee} \& \bibinfo{author}{Jens Palsberg}
  (\bibinfo{year}{2010}): \emph{\bibinfo{title}{Featherweight {X10}: a core
  calculus for async-finish parallelism}}.
\newblock In: {\sl \bibinfo{booktitle}{Proceedings of PPoPP'10}},
  \bibinfo{publisher}{ACM}, pp. \bibinfo{pages}{25--36},
  \doi{10.1145/1693453.1693459}.

\bibitemdeclare{techreport}{saraswat:x10-report}
\bibitem{saraswat:x10-report}
\bibinfo{author}{Vijay Saraswat} (\bibinfo{year}{2011}):
  \emph{\bibinfo{title}{Report on the Programming Language {X10}, version
  2.12}}.
\newblock \bibinfo{type}{Technical Report}, \bibinfo{institution}{IBM
  Research}.

\bibitemdeclare{inproceedings}{saraswat.jagadeesan:concurrent-clustered-programming}
\bibitem{saraswat.jagadeesan:concurrent-clustered-programming}
\bibinfo{author}{Vijay Saraswat} \& \bibinfo{author}{Radha Jagadeesan}
  (\bibinfo{year}{2005}): \emph{\bibinfo{title}{Concurrent clustered
  programming}}.
\newblock In: {\sl \bibinfo{booktitle}{Proceedings of CONCUR'05}}, {\sl
  \bibinfo{series}{LNCS}} \bibinfo{volume}{3653},
  \bibinfo{publisher}{Springer}, pp. \bibinfo{pages}{353--367},
  \doi{10.1007/11539452\_28}.

\bibitemdeclare{inproceedings}{shirako.peixotto.etal:phasers}
\bibitem{shirako.peixotto.etal:phasers}
\bibinfo{author}{Jun Shirako}, \bibinfo{author}{David~M. Peixotto},
  \bibinfo{author}{Vivek Sarkar} \& \bibinfo{author}{William~N. Scherer}
  (\bibinfo{year}{2008}): \emph{\bibinfo{title}{Phasers: a unified
  deadlock-free construct for collective and point-to-point synchronization}}.
\newblock In: {\sl \bibinfo{booktitle}{Proceedings of ICS'08}},
  \bibinfo{publisher}{ACM}, pp. \bibinfo{pages}{277--288},
  \doi{10.1145/1375527.1375568}.

\bibitemdeclare{inproceedings}{shirako.peixotto.etal:phaser-accumulators}
\bibitem{shirako.peixotto.etal:phaser-accumulators}
\bibinfo{author}{Jun Shirako}, \bibinfo{author}{David~M. Peixotto},
  \bibinfo{author}{Vivek Sarkar} \& \bibinfo{author}{William~N. Scherer}
  (\bibinfo{year}{2009}): \emph{\bibinfo{title}{Phaser accumulators: A new
  reduction construct for dynamic parallelism}}.
\newblock In: {\sl \bibinfo{booktitle}{Proceedings of IPDPS'09}},
  \bibinfo{publisher}{IEEE Computer Society}, pp. \bibinfo{pages}{1--12},
  \doi{10.1109/IPDPS.2009.5161071}.

\bibitemdeclare{inproceedings}{tseng:compiler-opt-barrier}
\bibitem{tseng:compiler-opt-barrier}
\bibinfo{author}{Chau-Wen Tseng} (\bibinfo{year}{1995}):
  \emph{\bibinfo{title}{Compiler optimizations for eliminating barrier
  synchronization}}.
\newblock In: {\sl \bibinfo{booktitle}{Proceedings of PPOPP'95}},
  \bibinfo{publisher}{ACM}, pp. \bibinfo{pages}{144--155},
  \doi{10.1145/209936.209952}.

\bibitemdeclare{inproceedings}{vasconcelos.martins.cogumbreiro:type-inference-mil}
\bibitem{vasconcelos.martins.cogumbreiro:type-inference-mil}
\bibinfo{author}{Vasco~T. Vasconcelos}, \bibinfo{author}{Francisco Martins} \&
  \bibinfo{author}{Tiago Cogumbreiro} (\bibinfo{year}{2010}):
  \emph{\bibinfo{title}{Type Inference for Deadlock Detection in a
  Multithreaded Typed Assembly Language}}.
\newblock In: {\sl \bibinfo{booktitle}{Post-proceedings of PLACES'09}}, {\sl
  \bibinfo{series}{EPTCS}}~\bibinfo{volume}{17}, pp. \bibinfo{pages}{95--109},
  \doi{10.4204/EPTCS.17.8}.

\end{thebibliography}
\bibliographystyle{eptcs}

\end{document}